\def\be{\begin{equation}}
\def\ee{\end{equation}}
\def\ba{\begin{array}}
\def\ea{\end{array}}
\def\qed{\leavevmode\unskip\penalty9999 \hbox{}\nobreak\hfill
     \quad\hbox{\leavevmode  \hbox to.77778em{%
               \hfil\vrule   \vbox to.675em%
               {\hrule width.6em\vfil\hrule}\vrule\hfil}}
     \par\vskip3pt}
\newtheorem{theorem}{Theorem}
\newtheorem{lemma}{Lemma}
\newtheorem{example}{Example}
\begin{document}
\title{\large\bf Separability and lower bounds of quantum entanglement based on realignment}

\author{ Jiaxin Sun,$^{1}$  Hongmei Yao,$^{1, *}$ Shao-Ming Fei,$^{2, \dag}$ and Zhaobing Fan$^{1}$}

\affiliation{ ${^1}$Department of Mathematics, Harbin Engineering University,  Harbin 150001, P.R. China \\
$^{2}$School of Mathematical Sciences, Capital Normal University, Beijing 100048, China}
\bigskip

\begin{abstract}
The detection and estimation of quantum entanglement are the essential issues in the theory of quantum entanglement. We construct matrices based on the realignment of density matrices and the vectorization of the reduced density matrices, from which a family of separability criteria are presented for both bipartite and multipartite systems. Moreover, new lower bounds of concurrence and convex-roof extended negativity are derived. Criteria are also given to detect the genuine tripartite entanglement. Lower bounds of the concurrence of genuine tripartite entanglement are presented. By detailed examples we show that our results are better than the corresponding ones in identifying and estimating quantum entanglement as well as genuine multipartite entanglement.
\end{abstract}

\pacs{03.67.-a, 02.20.Hj, 03.65.-w} \maketitle

\section{Introduction}
Quantum entanglement plays an essential role in quantum information processing \cite{nielsen2010quantum} such as quantum computation \cite{divincenzo1995quantum}, quantum teleportation \cite{bennett1993teleporting,albeverio2002optimal} and quantum cryptographic schemes \cite{ekert1991quantum,fuchs1997optimal}. However, it is a difficult problem to distinguish entangled states from the separable ones in general \cite{gurvits2003classical}. In the last decades, a variety of separability criteria have been presented to detect entanglement, such as the positive partial transpose (PPT) criterion or Peres-Horodecki criterion \cite{peres1996separability,horodecki1996separability}, computable cross-norm or realignment criteria \cite{rudolph2003some,rudolph2005further,chen2002matrix,zhang2008entanglement}, entanglement witnesses \cite{terhal2000bell,chruscinski2014entanglement} etc., see, e.g., \cite{horodecki2009quantum,guhne2009entanglement} for comprehensive surveys.

The PPT criterion \cite{peres1996separability} says that any partial transposed bipartite separable state is still positive semidefinite. This criterion is sufficient and necessary for the separability of $m \times n$ quantum sates with $m n \leq 6$ \cite{horodecki1996separability}. However, for higher-dimensional cases, there exist $2 \times 4$ and $3 \times 3$ states which are PPT but still entangled, termed as bound entangled states \cite{horodecki1997separability}. Another famous criterion is the realignment criterion \cite{rudolph2003some,rudolph2005further,chen2002matrix,zhang2008entanglement} which detects many bound entangled states. In \cite{lupo2008bipartite,li2011note}, by analyzing the symmetric functions of the singular values of the realigned matrices, the authors presented separability criteria which are better than the realignment criterion. In \cite{zhang2008entanglement}, based on the realignment of the matrix $\rho-\rho_{A} \otimes \rho_{B}$, improved realignment criterion has been presented, which detects entanglement better than the realignment criterion. In \cite{shi2023family}, the authors presented a family of separable criteria for bipartite states, which reduce to the improved realignment criterion for particular cases. In \cite{shen2015separability}, based on the realigned bipartite density matrix, the vectorization of the reduced density matrices and a parameterized Hermitian matrix, a family of separability criteria were presented, which detect entanglement more efficiently than the realignment criterion, and work also for multipartite systems. In \cite{zhang2017realignment}, based on the sequential realignment of density matrices, a separability criterion for multipartite quantum states has been presented. It detects the multipartite entanglement better than the criteria given in \cite{shen2015separability}. In \cite{li2017detection}, based on the realignment criterion and PPT criterion, the authors presented a criterion to detect the genuine multipartite entanglement of multipartite states. In \cite{qi2024detection}, based on the realigned matrix given in \cite{shi2023family}, a separability criterion for multipartite states has been presented, which detects the genuine multipartite entangled (GME) states that
are not separable with respect to any bipartition.

Another important problem in the study of quantum entanglement is the quantification of entanglement. Various entanglement measures have been presented in recent years \cite{horodecki2009quantum,guhne2009entanglement,lee2003convex,chen2005concurrence,de2007lower,huber2013entropy,chen2016lower,zhu2018lower}. The concurrence and the convex-roof extended negativity (CREN) are two well known measures of entanglement. Nevertheless, it is formidably difficult to find the analytical formulae of such entanglement measures in general due to the optimal minimization involved. In \cite{chen2005concurrence}, based on the positive partial transposition and realignment separability criteria, the lower bounds of concurrence were obtained. In \cite{de2007lower} the author presented analytical lower bounds of concurrence in terms of the local uncertainty relations and the correlation matrix separability criteria. In \cite{ma2011measure} the GME concurrence was presented, whose lower bound has been derived in \cite{li2017measure} based on the norms of the correlation tensors. In \cite{li2020improved}, the authors presented improved lower bounds of concurrence and the convex-roof extended negativity \cite{lee2003convex} based on Bloch representations.

In this paper, we first present a family of separability criteria for bipartite systems, from which we derive tighter lower bounds of concurrence and convex-roof extended negativity in Section \ref{S:2}. In Section \ref{S:3}, we generalize our separability criteria to multipartite systems. These criteria detect genuine multipartite entanglement as well as the multipartite full separability. Tighter lower bound of GME concurrence is obtained. By detailed examples we show that our separability criteria and lower bounds are better than the correspondingly existing ones. We summarize and conclude in Section IV.


\section{Detection and measures of entanglement for bipartite states}\label{S:2}
\subsection{Separability criteria for bipartite systems}
Let $\mathbb{C}^{m \times n}$ be the set of all $m\times n$ matrices over complex field $\mathbb{C}$, and $\mathbb{R}$ be the real number field. For a matrix $A=\left[a_{i j}\right]\in \mathbb{C}^{m \times n}$, the vectorization of matrix $A$ is defined as
$\operatorname{vec}(A)=\left(a_{11}, \!\cdots\!, a_{m 1}, a_{12}, \!\cdots\!, a_{m 2}, \!\cdots\!, a_{1 n}, \!\cdots\!, a_{m n}\right)^{T}$, where $T$ stands for the transpose.

Let $Z$ be an $m \times m$ block matrix with sub-blocks $Z_{i, j} \in \mathbb{C}^{n \times n}$, $i, j=1, \ldots, m$. The realigned matrix $\mathcal{R}(Z)$ of $Z$ is defined by
\begin{eqnarray}\notag
	\mathcal{R}(Z)=\left(\begin{array}{c}
		\operatorname{vec}\left(Z_{1,1}\right)^{T} \\
		\vdots \\
		\operatorname{vec}\left(Z_{m, 1}\right)^{T} \\
		\vdots \\
		\operatorname{vec}\left(Z_{1, m}\right)^{T} \\
		\vdots \\
		\operatorname{vec}\left(Z_{m, m}\right)^{T}
	\end{array}\right).
\end{eqnarray}

The realignment criterion \cite{chen2002matrix} says that any separable state $\rho$ in $\mathbb{C}^{d_{A}} \otimes \mathbb{C}^{d_{B}}$ satisfies
$\|\mathcal{R}(\rho)\|_{t r} \leqslant 1$, where $\|A\|_{t r}=\operatorname{tr}\left(\sqrt{A^{\dagger} A}\right)$ is the trace norm of $A$.

For any quantum state $\rho$ in $\mathbb{C}^{d_{A}} \otimes \mathbb{C}^{d_{B}}$, we define
\begin{eqnarray}\label{eq:m1}
	\mathcal{M}_{\alpha, \beta}^{l}(\rho)=\left(\begin{array}{cc}
		\alpha \beta E_{l \times l} & \alpha \omega_{l}\left(\operatorname{tr}_{A}(\rho)\right)^{T} \\
		\beta \omega_{l}\left(\operatorname{tr}_{B}(\rho)\right) & \mathcal{R}(\rho)
	\end{array}\right),
\end{eqnarray}
where $\alpha$ and $\beta$ are arbitrary real numbers, $l$ is a natural number, $E_{l \times l}$ is the matrix with all $l \times l$ elements being $1$, $\operatorname{tr}_{A}$ is the partial trace over the subsystem $A$, and $\omega_{l}(X)=\underbrace{(\operatorname{vec}(X), \cdots, \operatorname{vec}(X))}_{l}$ for any complex matrix $X$. Concerning $\mathcal{M}_{\alpha, \beta}^{l}(\rho)$ we have the following lemma, see proof in Appendix A.

\begin{lemma} \label{P:1} $\left(1\right)$ For any $\rho_{i}\in\mathbb{C}^{d_{A}}\!\otimes \mathbb{C}^{d_{B}}$ and $k_{i}\in \mathbb{R} $, $i=1,2, \ldots, n$, such that $\sum\limits_{i=1}^{n} k_{i}=1$, we have
$$
\mathcal{M}_{\alpha, \beta}^{l}\left(\sum_{i=1}^{n} k_{i} \rho_{i}\right)=\sum_{i=1}^{n} k_{i} \mathcal{M}_{\alpha, \beta}^{l}\left(\rho_{i}\right).
$$
$\left(2\right)$ Let $U$ and $V$ be unitary matrices on subsystems of A and B, respectively. For any $\rho\in\mathbb{C}^{d_{A}} \otimes \mathbb{C}^{d_{B}}$, we have $\left\| \mathcal{M}_{\alpha, \beta}^{l}\left((U \otimes V) \rho(U \otimes V)^{\dagger}\right)\left\|_{t r}=\right\| \mathcal{M}_{\alpha, \beta}^{l}\left(\rho\right) \|_{t r}\right.$.
\end{lemma}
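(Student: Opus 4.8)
The plan is to prove the two parts separately, since they rely on different structural features of the construction $\mathcal{M}_{\alpha,\beta}^{l}$.

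For part $(1)$, the strategy is to verify linearity blockwise. The matrix $\mathcal{M}_{\alpha,\beta}^{l}(\rho)$ is a $2\times 2$ block matrix whose four blocks are $\alpha\beta E_{l\times l}$, $\alpha\,\omega_{l}(\operatorname{tr}_{A}(\rho))^{T}$, $\beta\,\omega_{l}(\operatorname{tr}_{B}(\rho))$, and $\mathcal{R}(\rho)$. I would observe that each of these depends on $\rho$ through a map that is either linear or constant. The realignment $\mathcal{R}$ is manifestly linear in the entries of $Z$, since it merely permutes and transposes matrix entries; the partial traces $\operatorname{tr}_{A}$ and $\operatorname{tr}_{B}$ are linear; the vectorization $\operatorname{vec}$ and the replication operation $\omega_{l}$ are linear; and the constant block $\alpha\beta E_{l\times l}$ is unaffected by $\rho$. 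The only subtlety is the constant block: under $\rho\mapsto\sum_{i}k_{i}\rho_{i}$ it stays fixed at $\alpha\beta E_{l\times l}$, whereas on the right-hand side it appears as $\sum_{i}k_{i}\,\alpha\beta E_{l\times l}=\alpha\beta E_{l\times l}\sum_{i}k_{i}$. This is exactly where the hypothesis $\sum_{i=1}^{n}k_{i}=1$ is used, forcing the two constant blocks to agree. Assembling the four blockwise identities then yields the claimed equality.

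For part $(2)$, the goal is invariance of the trace norm under local unitaries. The key computational fact I would establish first is how $\mathcal{R}$ transforms under $\rho\mapsto(U\otimes V)\rho(U\otimes V)^{\dagger}$: using the identity $\operatorname{vec}(AXB)=(B^{T}\otimes A)\operatorname{vec}(X)$, one shows that realignment conjugation turns into multiplication on the left and right by fixed unitary (more precisely, by $\bar U\otimes U$ and $\bar V\otimes V$ type) matrices, so that $\mathcal{R}((U\otimes V)\rho(U\otimes V)^{\dagger})=(\bar{U}\otimes U)\,\mathcal{R}(\rho)\,(\bar{V}\otimes V)^{T}$ up to the exact unitary factors one reads off from the vectorization identity. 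Simultaneously I would track how the reduced states transform: $\operatorname{tr}_{B}((U\otimes V)\rho(U\otimes V)^{\dagger})=U\operatorname{tr}_{B}(\rho)U^{\dagger}$ and similarly $\operatorname{tr}_{A}$ picks up $V$, so the corresponding $\omega_{l}$ blocks are conjugated consistently. The plan is then to exhibit two unitary matrices $P$ and $Q$ (block-diagonal, with an identity block of size $l$ acting on the scalar part and the appropriate Kronecker-unitary block acting on the realigned part) such that
\begin{eqnarray}\notag
\mathcal{M}_{\alpha,\beta}^{l}\bigl((U\otimes V)\rho(U\otimes V)^{\dagger}\bigr)=P\,\mathcal{M}_{\alpha,\beta}^{l}(\rho)\,Q.
\end{eqnarray}
Because the trace norm is invariant under multiplication by unitaries, $\|P M Q\|_{tr}=\|M\|_{tr}$, the result follows immediately once $P$ and $Q$ are shown to be unitary and to produce the correct transformation of all four blocks simultaneously.

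The main obstacle I anticipate is in part $(2)$: it is the bookkeeping required to confirm that a \emph{single} pair $(P,Q)$ conjugates all four blocks correctly at once. The off-diagonal blocks $\omega_{l}(\operatorname{tr}_{A}(\rho))^{T}$ and $\omega_{l}(\operatorname{tr}_{B}(\rho))$ must transform under the same Kronecker-unitary factors that act on $\mathcal{R}(\rho)$, while the scalar block $\alpha\beta E_{l\times l}$ must remain invariant; this forces $P$ and $Q$ to act as the identity on the size-$l$ part, and one must check that the vectorizations $\operatorname{vec}(\operatorname{tr}_{A}(\rho))$ and $\operatorname{vec}(\operatorname{tr}_{B}(\rho))$ indeed pick up exactly the same unitary factors ($\bar V\otimes V$ and $\bar U\otimes U$ respectively) as the corresponding sides of $\mathcal{R}(\rho)$. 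Verifying this compatibility is the crux, and it rests on carefully matching the index conventions of $\operatorname{vec}$, $\omega_{l}$, and the block structure of $\mathcal{R}$; once the factors are seen to align, unitarity of $P$ and $Q$ and hence the trace-norm identity are routine.
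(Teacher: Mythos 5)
Your proposal is correct and follows essentially the same route as the paper: part (1) by blockwise linearity of $\operatorname{vec}$, $\omega_{l}$, $\mathcal{R}$ and the partial traces, with $\sum_{i}k_{i}=1$ absorbing the constant block $\alpha\beta E_{l\times l}$; part (2) by using $\operatorname{vec}(AXB)=(B^{T}\otimes A)\operatorname{vec}(X)$ to show each block is multiplied by $\overline{U}\otimes U$ on the left and $(\overline{V}\otimes V)^{T}$ on the right, assembling block-diagonal unitaries with an identity block of size $l$, and invoking unitary invariance of the trace norm. The compatibility check you flag as the crux is precisely the computation the paper carries out in its Appendix A.
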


By Lemma 1 we have the following separability criterion.

\begin{theorem}\label{th:1} If a state $\rho\in\mathbb{C}^{d_{A}} \otimes \mathbb{C}^{d_{B}}$ is separable, then
\begin{eqnarray}\notag
\left\|\mathcal{M}_{\alpha, \beta}^{l}\left(\rho\right)\right\|_{t r} \leq \sqrt{\left(l \alpha^{2}+1\right)\left(l \beta^{2}+1\right)},
\end{eqnarray}
where $\mathcal{M}_{\alpha,\beta}^{l}\left(\rho\right)$ is defined in (\ref{eq:m1}).
\end{theorem}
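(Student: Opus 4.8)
The plan is to reduce the general separable case to product states by convexity, and then to recognize $\mathcal{M}_{\alpha,\beta}^{l}$ of a product state as a rank-one matrix whose trace norm can be computed directly. First I would write a separable $\rho$ as $\rho=\sum_{i}p_{i}\,\rho_{A}^{i}\otimes\rho_{B}^{i}$ with $p_{i}\geq 0$ and $\sum_{i}p_{i}=1$, each factor a (pure) density matrix. Part $(1)$ of Lemma \ref{P:1} then gives $\mathcal{M}_{\alpha,\beta}^{l}(\rho)=\sum_{i}p_{i}\,\mathcal{M}_{\alpha,\beta}^{l}(\rho_{A}^{i}\otimes\rho_{B}^{i})$, so by the triangle inequality for the trace norm it suffices to prove the bound $\|\mathcal{M}_{\alpha,\beta}^{l}(\rho_{A}\otimes\rho_{B})\|_{tr}\leq\sqrt{(l\alpha^{2}+1)(l\beta^{2}+1)}$ for a single product state; the convex estimate follows at once since $\sum_{i}p_{i}=1$.

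For a product state I would compute the three ingredients of $\mathcal{M}_{\alpha,\beta}^{l}$. Since $\operatorname{tr}(\rho_{A})=\operatorname{tr}(\rho_{B})=1$, the reduced states are $\operatorname{tr}_{A}(\rho_{A}\otimes\rho_{B})=\rho_{B}$ and $\operatorname{tr}_{B}(\rho_{A}\otimes\rho_{B})=\rho_{A}$. The essential computation is the realignment of a tensor product: following the row-ordering convention in the definition of $\mathcal{R}$, the $(i,j)$ block $(\rho_{A})_{ij}\rho_{B}$ contributes the row $(\rho_{A})_{ij}\operatorname{vec}(\rho_{B})^{T}$, so that $\mathcal{R}(\rho_{A}\otimes\rho_{B})=\operatorname{vec}(\rho_{A})\operatorname{vec}(\rho_{B})^{T}$. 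Writing $u=\operatorname{vec}(\rho_{A})$, $v=\operatorname{vec}(\rho_{B})$, and $\mathbf{1}_{l}$ for the all-ones column vector of length $l$, the four blocks become $\alpha\beta\,\mathbf{1}_{l}\mathbf{1}_{l}^{T}$, $\alpha\,\mathbf{1}_{l}v^{T}$, $\beta\,u\,\mathbf{1}_{l}^{T}$, and $uv^{T}$.

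The key observation is that these blocks assemble into a single outer product. Letting $p$ be the column vector obtained by stacking $\alpha\mathbf{1}_{l}$ above $u$, and $q$ the column vector stacking $\beta\mathbf{1}_{l}$ above $v$, one checks directly that $\mathcal{M}_{\alpha,\beta}^{l}(\rho_{A}\otimes\rho_{B})=pq^{T}$, a rank-one matrix. Its trace norm equals $\|p\|\,\|q\|$, the single nonzero singular value of $pq^{T}$, where $\|p\|^{2}=l\alpha^{2}+\|u\|^{2}=l\alpha^{2}+\operatorname{tr}(\rho_{A}^{2})$ and $\|q\|^{2}=l\beta^{2}+\operatorname{tr}(\rho_{B}^{2})$, using $\|\operatorname{vec}(X)\|^{2}=\operatorname{tr}(X^{\dagger}X)$ together with the hermiticity of $\rho_{A},\rho_{B}$. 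Because $\operatorname{tr}(\rho_{A}^{2})\leq 1$ and $\operatorname{tr}(\rho_{B}^{2})\leq 1$ for density matrices (with equality in the pure case), this gives $\|\mathcal{M}_{\alpha,\beta}^{l}(\rho_{A}\otimes\rho_{B})\|_{tr}\leq\sqrt{(l\alpha^{2}+1)(l\beta^{2}+1)}$, completing the reduction.

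I expect the main obstacle to lie in the bookkeeping of the middle two steps: verifying the tensor-product realignment identity against the precise row ordering fixed in the definition of $\mathcal{R}$, and confirming that the block structure of $\mathcal{M}_{\alpha,\beta}^{l}$ collapses exactly into the outer product $pq^{T}$ rather than merely into some low-rank matrix. Once that factorization is established, the trace-norm evaluation and the final bound are immediate. I note that Lemma \ref{P:1}$(2)$ is not strictly required for this argument, although it could be invoked to bring each pure product factor into a canonical computational-basis form before the computation, which would render the rank-one identity even more transparent.
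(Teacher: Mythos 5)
Your proposal is correct and follows essentially the same route as the paper: convex decomposition into product states via Lemma \ref{P:1}(1) and the triangle inequality, the identity $\mathcal{R}(\rho_A\otimes\rho_B)=\operatorname{vec}(\rho_A)\operatorname{vec}(\rho_B)^{T}$, and the recognition of $\mathcal{M}_{\alpha,\beta}^{l}(\rho_A\otimes\rho_B)$ as the rank-one outer product $pq^{T}$ whose trace norm is $\sqrt{(l\alpha^{2}+\operatorname{tr}\rho_A^{2})(l\beta^{2}+\operatorname{tr}\rho_B^{2})}$. The only cosmetic difference is that the paper takes the factors to be pure (so $\operatorname{tr}\rho_A^{2}=\operatorname{tr}\rho_B^{2}=1$ exactly), whereas you close with the inequality $\operatorname{tr}\rho^{2}\leq 1$, which is equally valid.
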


\begin{proof}
Since $\rho $ is separable, it can be written as a convex combination of pure states,
$\rho=\sum_{i} p_{i} \rho_{A}^{i} \otimes \rho_{B}^{i}$, where $p_{i} \in[0,1]$ with $\sum\limits_{i} p_{i}=1$, $\rho_{A}^{i}$ and $\rho_{B}^{i}$ are pure states of the subsystems $A$ and $B$, respectively. From Lemma 1 we have
\begin{eqnarray}\label{eq:4}\notag
&& \left\|\mathcal{M}_{\alpha, \beta}^{l}\left(\rho\right)\right\|_{t r}\\\notag&=&\left\|\sum_{i} p_{i} \mathcal{M}_{\alpha, \beta}^{l}\left(\rho_{A}^{i} \otimes \rho_{B}^{i}\right)\right\|_{t r} \\&\leq& \sum_{i} p_{i}\left\|\mathcal{M}_{\alpha, \beta}^{l}\left(\rho_{A}^{i} \otimes \rho_{B}^{i}\right)\right\|_{t r}.
\end{eqnarray}

Since for any $A\in \mathbb{C}^{m \times n}$ and $B \in \mathbb{C}^{p \times q}$,
\begin{eqnarray}\label{eq:r4}
\mathcal{R}(A \otimes B)=\operatorname{vec}(A)\operatorname{vec}(B)^{T},
\end{eqnarray}
we obtain from the definition of $\mathcal{M}_{\alpha,\beta}^{l}\left(\rho\right)$,
	\begin{eqnarray}\label{eq:6}\notag
			& & \left\|\mathcal{M}_{\alpha, \beta}^l\left(\rho_A^i \otimes \rho_B^i\right)\right\|_{t r} \\\notag
			&=&\left\|\left(\begin{array}{cc}
				\alpha \beta E_{l \times l} & \alpha \omega_l\!\left(\!\operatorname{tr}_A\left(\rho_A^i \otimes \rho_B^i \!\right)\!\right)^T \\
				\beta \omega_l\left(\!\operatorname{tr}_B\left(\rho_A^i \otimes \rho_B^i\!\right)\right) & \mathcal{R}\left(\rho_A^i \otimes \rho_B^i\!\right)
			\end{array}\right)\right\|_{t r} \\\notag
			& =&\left\|\left(\begin{array}{cc}
				\alpha \beta E_{l \times l} & \alpha \omega_l\left(\rho_B^i\right)^T \\
				\beta \omega_l\left(\rho_A^i\right) & \operatorname{vec}\left(\rho_A^i\right) \operatorname{vec}\left(\rho_B^i\right)^T
			\end{array}\right)\right\|_{t r} \\
			& =&\|\left(\begin{array}{c}
				\alpha E_{l \times 1} \\
				\operatorname{vec}\left(\rho_A^i\right)
			\end{array}\right)\left(\begin{array}{cc}
				\beta E_{1 \times l} \quad\left.\operatorname{vec}\left(\rho_B^i\right)^T\right)
			\end{array} \|_{t r}.\right.
	\end{eqnarray}
As $\rho_{A}^{i}$ and $\rho_{B}^{i}$ are pure states, i.e., $\operatorname{tr}\left(\rho_A^i\right)^2=\operatorname{tr}\left(\rho_B^i\right)^2=1$, we have
	\begin{eqnarray}\label{eq:27}\notag
			& &\left\|\left(\begin{array}{c}
				\alpha E_{l \times 1} \\
				\operatorname{vec}\left(\rho_A^i\right)
			\end{array}\right)\left(\beta E_{1 \times l}\quad \operatorname{vec}\left(\rho_B^i\right)^T\right)\right\|_{t r} \\\notag
			& =&\sqrt{l \beta^2+1}\operatorname{tr}\left(\left(\begin{array}{c}
				\alpha E_{l \times 1} \\
				\operatorname{vec}\left(\rho_A^i\right)
			\end{array}\right)\left(\alpha E_{1 \times l} \quad \operatorname{vec}\left(\rho_A^i\right)^T\right)\right)^{\frac{1}{2}} \\
			& =&\sqrt{l \beta^2+1} \sqrt{l \alpha^2+1}.
	\end{eqnarray}
Combining (\ref{eq:4}), (\ref{eq:6}) and (\ref{eq:27}) we get
	\begin{eqnarray}\notag
			&&\left\|\mathcal{M}_{\alpha, \beta}^{l}\left(\rho\right)\right\|_{t r}\\\notag& \leq& \sum_{i} p_{i}\left\|\mathcal{M}_{\alpha, \beta}^{l}\left(\rho_{A}^{i} \otimes \rho_{B}^{i}\right)\right\|_{t r}
			\\\notag&=&\sqrt{\left(l \alpha^{2}+1\right)\left(l \beta^{2}+1\right)},
	\end{eqnarray}
which completes the proof.
\end{proof}

We illustrate the Theorem \ref{th:1} by two examples.

\begin{example}\label{ex:1}
Consider the state $\rho_{x}=x\left|\xi \right\rangle\left\langle\xi\right|+ \left(1-x\right)\rho_{d}$, where $\rho_{d}$ is the $2 \times 4$ bound entangled state,
$$
	\rho_{d}=\frac{1}{1+7d}
	\begin{pmatrix}
		d & 0 & 0 & 0 & 0 & d & 0 & 0 \\
		0 & d & 0 & 0 & 0 & 0 & d & 0 \\
		0 & 0 & d & 0 & 0 & 0 & 0 & d \\
		0 & 0 & 0 & d & 0 & 0 & 0 & 0 \\
		0 & 0 & 0 & 0 & \frac{1+d}{2} & 0 & 0 & \frac{\sqrt{1-d^2}}{2} \\
		d & 0 & 0 & 0 & 0 & d & 0 & 0\\
		0 & d & 0 & 0 & 0 & 0 & d & 0\\
		0 & 0 & d & 0 & \frac{\sqrt{1-d^2}}{2} & 0 & 0 & \frac{1+d}{2} \\
	\end{pmatrix},
$$
where $0 \textless d\textless 1$, and $\left|\xi \right\rangle=\frac{1}{\sqrt{2}}(|00\rangle+|11\rangle)$.
\end{example}

Set $d=0.9$ and take $\alpha=11.66$, $\beta=11.75$ and $l=5$ in Theorem \ref{th:1}. By direct calculation we get from Theorem \ref{th:1} that $\rho_{x}$ is entangled for $0.233889 \leq x \leq 1$. While from the realignment criterion \cite{chen2002matrix}, the entanglement of $\rho_{x}$ is detected for $0.252758\leq x \leq 1$. The Theorem $1$ in \cite{shi2023family} detects the entanglement of $\rho_{x}$ for $0.233931 \leq x \leq 1$, and the Corollary 2.1 in \cite{shen2015separability} detects the entanglement of $\rho_{x}$ for $0.234778  \leq x \leq 1$. Obviously our Theorem \ref{th:1} detects better the entanglement of the state $\rho_{x}$.

\begin{example}\label{ex:2}
Consider the mixture of the bound entangled state proposed by Horodecki \cite{horodecki1997separability},
	$$
	\rho_{x}=\frac{1}{1+8x}
	\begin{pmatrix}
		x & 0 & 0 & 0 & x & 0 & 0 & 0 & x\\
		0 & x & 0 & 0 & 0 & 0 & 0 & 0 & 0\\
		0 & 0 & x & 0 & 0 & 0 & 0 & 0 & 0\\
		0 & 0 & 0 & x & 0 & 0 & 0 & 0 & 0 \\
		x & 0 & 0 & 0 & x & 0 & 0 & 0 &  x\\
		0 & 0 & 0 & 0 & 0 & x & 0 & 0 & 0\\
		0 & 0 & 0 & 0 & 0 & 0 & \frac{1+x}{2} & 0 & \frac{\sqrt{1-x^2}}{2}\\
		x & 0 & 0 & 0 & x & 0 & \frac{\sqrt{1-x^2}}{2} & 0 & \frac{1+x}{2}\\
	\end{pmatrix}
	$$
and the $9\times9$ identity matrix $I_9$,
$$
\rho_{p}=\frac{1-p}{9} I_{9}+p \rho_{x}.
$$
\end{example}

We take $\alpha=\beta=2$ and $l=10$. We compare among the results from our Theorem \ref{th:1}, realignment criterion from \cite{chen2002matrix} and Theorem $1$ in \cite{shi2023family} for different values of $x$, see Table \ref{tab:1}. Table \ref{tab:1} shows that our Theorem \ref{th:1} detects better the entanglement of the state $\rho_{p}$ than the criteria from \cite{chen2002matrix} and \cite{shi2023family}.
\setlength{\tabcolsep}{15pt}
\begin{center}
	\begin{table*}[ht]
\caption{Entanglement of the state $\rho_{p}$ in Example \ref{ex:2} for different values of $x$}
		\label{tab:1}
		\begin{tabular}{cccc}
			\hline\noalign{\smallskip}
		$x$ & realignment in \cite{chen2002matrix} & Theorem $1$ in \cite{shi2023family} & Our Theorem \ref{th:1}  \\
		\noalign{\smallskip}\hline\noalign{\smallskip}
		0.2 & $0.9955 \leq p \leq 1$ & $0.9943  \leq p \leq 1$ & $0.9942  \leq p \leq 1$ \\
		
		0.4 & $0.9960  \leq p \leq 1$ & $0.9948  \leq p \leq 1$ & $0.9947  \leq p \leq 1$ \\
		
		0.6 & $0.9972  \leq p \leq 1$ & $0.9964  \leq p \leq 1$ & $0.9963 \leq p \leq 1$ \\
		
		0.8 & $0.9986  \leq p \leq 1$ & $0.9982  \leq p \leq 1$ & $0.99815  \leq p \leq 1$ \\
		
		0.9 & $0.9993  \leq p \leq 1$ & $0.9991  \leq p \leq 1$ & $0.99908  \leq p \leq 1$ \\
		\noalign{\smallskip}\hline
		\end{tabular}
	\end{table*}
\end{center}

\subsection{Lower bounds of concurrence and CREN for bipartite states}
The concurrence of a pure state $|\varphi\rangle\in\mathbb{C}^{d_{A}} \otimes \mathbb{C}^{d_{B}}$ is defined by \cite{rungta2001universal}
\begin{eqnarray}\label{C:1}
	C\left(|\varphi\rangle\right)=\sqrt{2\left(1-\operatorname{tr} \left(\rho_{A}^{2}\right)\right)},
\end{eqnarray}
where $\rho_{A}=\operatorname{tr}_{B}\left(|\varphi\rangle\langle\varphi|\right)$. The concurrence of a mixed state $\rho$ is defined as
\begin{eqnarray}\label{C:2}
	C\left(\rho\right)=\min _{\left\{p_{i}, \left|\varphi_{i}\right\rangle\right\}} \sum_{i} p_{i} C\left(\left|\varphi_{i}\right\rangle\right),
\end{eqnarray}
where the minimum is taken over all possible ensemble decompositions of
$\rho=\sum\limits_{i}p_{i}\left|\varphi_{i}\right\rangle\left\langle\varphi_{i}\right|$, $p_{i} \geqslant 0$ with $\quad\sum\limits_{i}p_{i}=1$.

The CREN  of a pure state $|\varphi\rangle\in \mathbb{C}^{d_{A}} \otimes \mathbb{C}^{d_{B}}$ is defined by \cite{lee2003convex}
\begin{eqnarray}\notag
	 \mathcal{N}\left(|\varphi\rangle\right)
=\frac{\left\|\left(|\varphi\rangle\langle\varphi|\right)^{T_{B}}\right\|_{t r}-1}{d-1},
\end{eqnarray}
where $d=\min \left(d_A, d_B\right)$, $\left(|\varphi\rangle\langle\varphi|\right)^{T_{B}}$ denotes the partial transpose of $|\varphi\rangle\langle\varphi|$. For a mixed state $\rho$, its CREN is defined via convex roof extension,
\begin{eqnarray}\label{N:2}
	\mathcal{N}\left(\rho\right)=\min _{\left\{p_{i},\left|\varphi_{i}\right\rangle\right\}} \sum_{i} p_{i} \mathcal{N}\left(\left|\varphi_{i}\right\rangle\right),
\end{eqnarray}
where the minimum is taken over all possible pure state decompositions of $\rho$.

To derive the lower bounds of concurrence and CREN for arbitrary density matrices, we first present the following lemma, see proof in Appendix B.

\begin{lemma}\label{le:2} Let $|\varphi\rangle$ be a pure bipartite state in systems A and B, with Schmidt decomposition $|\varphi\rangle=\sum\limits_{i=0}^{d-1} \sqrt{\mu_{i}}\left|i_{A} i_{B}\right\rangle$, where $d=\min \left(d_{A}, d_{B}\right)$. Then

\noindent(1)
$\left\|\mathcal{M}_{\alpha, \beta}^{l}\left(|\varphi\rangle\langle\varphi|\right)\right\|_{t r}
\\\leq \sqrt{\left(l \alpha^{2}+1\right)\left(l \beta^{2}+1\right)} +2\sum\limits_{0 \leq i<j \leq d-1} \sqrt{\mu_{i} \mu_{j}}$,

\noindent(2)	
	$ \left\|\mathcal{M}_{\alpha, \beta}^{l}(|\varphi\rangle\langle\varphi|)\right\|_{t r}\leq \sqrt{\left(l \alpha^{2}+1\right)\left(l \beta^{2}+1\right)}+(d-1)$.
\end{lemma}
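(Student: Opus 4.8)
The plan is to compute $\mathcal{M}_{\alpha,\beta}^{l}(|\varphi\rangle\langle\varphi|)$ explicitly using the Schmidt decomposition, reducing the trace norm to a manageable block form, and then bound it two different ways to get parts (1) and (2). Throughout I would use the identity $\mathcal{R}(A\otimes B)=\operatorname{vec}(A)\operatorname{vec}(B)^{T}$ from equation (\ref{eq:r4}), which is already available.

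First I would expand $|\varphi\rangle\langle\varphi| = \sum_{i,j}\sqrt{\mu_i\mu_j}\,|i_Ai_B\rangle\langle j_Aj_B|$ and apply the realignment map term by term. Since $|i_Ai_B\rangle\langle j_Aj_B| = |i_A\rangle\langle j_A|\otimes|i_B\rangle\langle j_B|$, equation (\ref{eq:r4}) gives $\mathcal{R}(|i_Ai_B\rangle\langle j_Aj_B|)=\operatorname{vec}(|i_A\rangle\langle j_A|)\,\operatorname{vec}(|i_B\rangle\langle j_B|)^{T}$, so $\mathcal{R}(|\varphi\rangle\langle\varphi|)=\sum_{i,j}\sqrt{\mu_i\mu_j}\,\operatorname{vec}(|i_A\rangle\langle j_A|)\operatorname{vec}(|i_B\rangle\langle j_B|)^{T}$. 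Likewise I would simplify the border blocks: $\operatorname{tr}_B(|\varphi\rangle\langle\varphi|)=\sum_i\mu_i|i_A\rangle\langle i_A|$ and symmetrically for $\operatorname{tr}_A$, so $\omega_l(\operatorname{tr}_B(\cdot))$ stacks copies of $\sum_i\mu_i\operatorname{vec}(|i_A\rangle\langle i_A|)$. The key structural observation I want to exploit is that the whole matrix $\mathcal{M}_{\alpha,\beta}^{l}(|\varphi\rangle\langle\varphi|)$ splits naturally into a ``diagonal'' part (collecting all $i=j$ Schmidt terms) plus an ``off-diagonal'' part (the $i\neq j$ terms). The diagonal part is exactly the rank-one block product that appeared in the proof of Theorem \ref{th:1}, namely $\bigl(\begin{smallmatrix}\alpha E_{l\times1}\\ \operatorname{vec}(\rho_A)\end{smallmatrix}\bigr)\bigl(\beta E_{1\times l}\ \operatorname{vec}(\rho_B)^{T}\bigr)$ with $\rho_A=\operatorname{tr}_B(|\varphi\rangle\langle\varphi|)$, whose trace norm is $\sqrt{(l\alpha^2+1)(l\beta^2+1)}$ by the pure-state computation in (\ref{eq:27}).

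The decisive step is to control the off-diagonal part. Here I would use the triangle inequality for the trace norm: $\|\mathcal{M}_{\alpha,\beta}^{l}(|\varphi\rangle\langle\varphi|)\|_{tr}\le\|\text{diagonal part}\|_{tr}+\|\text{off-diagonal part}\|_{tr}$. The off-diagonal part lives entirely in the $\mathcal{R}(\rho)$ block and equals $\sum_{i\neq j}\sqrt{\mu_i\mu_j}\,\operatorname{vec}(|i_A\rangle\langle j_A|)\operatorname{vec}(|i_B\rangle\langle j_B|)^{T}$. Because the Schmidt vectors are orthonormal, the vectors $\operatorname{vec}(|i_A\rangle\langle j_A|)$ for distinct pairs $(i,j)$ are mutually orthogonal and unit-norm, and similarly on the $B$ side; hence each term $\sqrt{\mu_i\mu_j}\,\operatorname{vec}(|i_A\rangle\langle j_A|)\operatorname{vec}(|i_B\rangle\langle j_B|)^{T}$ is a rank-one matrix with singular value exactly $\sqrt{\mu_i\mu_j}$, and these rank-one pieces have mutually orthogonal left and right singular vectors. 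Therefore the off-diagonal part is already in singular-value form and its trace norm is the sum of the singular values, $\sum_{i\neq j}\sqrt{\mu_i\mu_j}=2\sum_{0\le i<j\le d-1}\sqrt{\mu_i\mu_j}$. Combining this with the diagonal bound yields part (1).

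For part (2) I would simply bound the quantity obtained in (1) by a cleaner expression. The aim is $2\sum_{i<j}\sqrt{\mu_i\mu_j}\le d-1$. This follows from the normalization $\sum_i\mu_i=1$ together with the AM--GM-type inequality: one has $2\sum_{i<j}\sqrt{\mu_i\mu_j}\le(d-1)\sum_i\mu_i=d-1$, which is the standard estimate $\bigl(\sum_i\sqrt{\mu_i}\bigr)^2\le d\sum_i\mu_i$ rearranged, or equivalently comparing $\sum_{i\neq j}\sqrt{\mu_i\mu_j}$ against $(d-1)\sum_i\mu_i$ term by term via $2\sqrt{\mu_i\mu_j}\le\mu_i+\mu_j$. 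Substituting into part (1) gives part (2). \emph{The main obstacle} I anticipate is establishing rigorously that the diagonal and off-diagonal parts can be separated and that the off-diagonal block's singular values are exactly $\{\sqrt{\mu_i\mu_j}\}_{i\neq j}$; this rests on verifying the orthonormality of the $\operatorname{vec}(|i_A\rangle\langle j_A|)$ system and checking that the border blocks $\alpha\omega_l$ and $\beta\omega_l$ interact only with the diagonal part, so that the triangle-inequality splitting is tight enough to give these clean bounds.
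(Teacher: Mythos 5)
Your overall architecture---split $\mathcal{M}_{\alpha,\beta}^{l}(|\varphi\rangle\langle\varphi|)$ into the $i=j$ Schmidt terms plus the $i\neq j$ terms, bound the first piece by $\sqrt{(l\alpha^{2}+1)(l\beta^{2}+1)}$ and the second by $2\sum_{i<j}\sqrt{\mu_i\mu_j}$---is exactly the paper's, and your treatment of the off-diagonal piece (mutually orthogonal rank-one terms with singular values $\sqrt{\mu_i\mu_j}$) and of part (2) via $2\sqrt{\mu_i\mu_j}\le\mu_i+\mu_j$ is sound. But there is a genuine error in how you handle the diagonal piece. You claim it ``is exactly the rank-one block product'' $\bigl(\begin{smallmatrix}\alpha E_{l\times1}\\ \operatorname{vec}(\rho_A)\end{smallmatrix}\bigr)\bigl(\beta E_{1\times l}\ \operatorname{vec}(\rho_B)^{T}\bigr)$ from the proof of Theorem \ref{th:1}. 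That is false: the $i=j$ part of the realigned block is $\sum_i\mu_i\operatorname{vec}(|i_A\rangle\langle i_A|)\operatorname{vec}(|i_B\rangle\langle i_B|)^{T}$, a rank-$d$ matrix with singular values $\mu_0,\dots,\mu_{d-1}$, whereas the $(2,2)$ block of your rank-one product is $\operatorname{vec}(\rho_A)\operatorname{vec}(\rho_B)^{T}=\sum_{i,j}\mu_i\mu_j\operatorname{vec}(|i_A\rangle\langle i_A|)\operatorname{vec}(|j_B\rangle\langle j_B|)^{T}$, a rank-one matrix with singular value $\sum_i\mu_i^{2}$. These coincide only when $|\varphi\rangle$ is a product state. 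Consequently the computation in (\ref{eq:27}) does not apply to your diagonal piece, and your stated justification for the bound $\sqrt{(l\alpha^{2}+1)(l\beta^{2}+1)}$ collapses.

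The bound itself is nonetheless correct, and the repair is what the paper actually does: the diagonal piece (border blocks included) equals $\mathcal{M}_{\alpha,\beta}^{l}(\sigma)$ for the \emph{separable mixed} state $\sigma=\sum_i\mu_i|i_Ai_B\rangle\langle i_Ai_B|$, so Theorem \ref{th:1} (equivalently, writing $\sigma$ as a convex combination of the product pure states $|i_Ai_B\rangle\langle i_Ai_B|$, applying Lemma \ref{P:1}(1) and then (\ref{eq:27}) to each term) gives $\|\mathcal{M}_{\alpha,\beta}^{l}(\sigma)\|_{tr}\le\sqrt{(l\alpha^{2}+1)(l\beta^{2}+1)}$. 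Note also that your appeal to the triangle inequality is legitimate but weaker than necessary: since the diagonal and off-diagonal pieces have orthogonal row and column supports in the Schmidt basis, the trace norm actually splits additively with equality, which is how the paper phrases it; for the upper bound in the lemma, either suffices.
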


According to the above lemma, we have

\begin{theorem} \label{th:2}
For a bipartite state $\rho\in\mathbb{C}^{d_{A}} \otimes \mathbb{C}^{d_{B}}$, the concurrence satisfies that
\begin{eqnarray}\notag
C \!\left(\rho\right)\geq \frac{\sqrt{2}}{\sqrt{d(d-1)}}\!\!\left(\left\|\mathcal{M}_{\alpha, \beta}^{l}\left(\rho\right)\right\|_{t r}\!\!-\!\sqrt{\left(l \alpha^{2}+1\!\right)\left(l \beta^{2}+1\!\right)}\right),
\end{eqnarray}
where $d=\min \left(d_{A}, d_{B}\right)$.
\end{theorem}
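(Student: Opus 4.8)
The plan is to reduce the mixed-state bound to the pure-state case via the convex-roof definition, and then invoke Lemma~\ref{le:2}. First I would take an optimal decomposition $\rho=\sum_i p_i |\varphi_i\rangle\langle\varphi_i|$ achieving the minimum in \eqref{C:2}, so that $C(\rho)=\sum_i p_i C(|\varphi_i\rangle)$. The key is to express the pure-state concurrence \eqref{C:1} in terms of the Schmidt coefficients: for $|\varphi_i\rangle=\sum_{k}\sqrt{\mu_k^{(i)}}|k_A k_B\rangle$ one has $\operatorname{tr}(\rho_A^2)=\sum_k (\mu_k^{(i)})^2$, hence
\begin{eqnarray}\notag
C(|\varphi_i\rangle)=\sqrt{2\Big(1-\sum_k (\mu_k^{(i)})^2\Big)}=\sqrt{2\sum_{k<k'}2\mu_k^{(i)}\mu_{k'}^{(i)}},
\end{eqnarray}
using $\big(\sum_k \mu_k^{(i)}\big)^2=1$. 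This links the concurrence directly to the cross-term sum $\sum_{k<k'}\sqrt{\mu_k^{(i)}\mu_{k'}^{(i)}}$ that appears in part (1) of Lemma~\ref{le:2}.

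The central inequality to establish is therefore a comparison between $\sum_{k<k'}\sqrt{\mu_k\mu_{k'}}$ and $\sqrt{\sum_{k<k'}\mu_k\mu_{k'}}$. Since there are $\binom{d}{2}=d(d-1)/2$ cross terms, the Cauchy--Schwarz inequality gives
\begin{eqnarray}\notag
\sum_{k<k'}\sqrt{\mu_k\mu_{k'}}\leq \sqrt{\frac{d(d-1)}{2}}\,\sqrt{\sum_{k<k'}\mu_k\mu_{k'}}=\sqrt{\frac{d(d-1)}{2}}\cdot\frac{C(|\varphi_i\rangle)}{\sqrt{2}}.
\end{eqnarray}
Combining this with part (1) of Lemma~\ref{le:2} yields, for each pure state,
\begin{eqnarray}\notag
\big\|\mathcal{M}_{\alpha,\beta}^{l}(|\varphi_i\rangle\langle\varphi_i|)\big\|_{tr}-\sqrt{(l\alpha^2+1)(l\beta^2+1)}\leq \frac{\sqrt{d(d-1)}}{\sqrt{2}}\,C(|\varphi_i\rangle),
\end{eqnarray}
which rearranges to $C(|\varphi_i\rangle)\geq \frac{\sqrt 2}{\sqrt{d(d-1)}}\big(\|\mathcal{M}_{\alpha,\beta}^{l}(|\varphi_i\rangle\langle\varphi_i|)\|_{tr}-\sqrt{(l\alpha^2+1)(l\beta^2+1)}\big)$.

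Finally I would lift this to the mixed state by averaging with the weights $p_i$, using part (1) of Lemma~\ref{P:1} (affineness of $\mathcal{M}_{\alpha,\beta}^{l}$) together with the triangle inequality for the trace norm. Concretely, $\|\mathcal{M}_{\alpha,\beta}^{l}(\rho)\|_{tr}=\|\sum_i p_i\mathcal{M}_{\alpha,\beta}^{l}(|\varphi_i\rangle\langle\varphi_i|)\|_{tr}\leq \sum_i p_i\|\mathcal{M}_{\alpha,\beta}^{l}(|\varphi_i\rangle\langle\varphi_i|)\|_{tr}$; combining with the per-state bound and $\sum_i p_i=1$ gives the result, since $C(\rho)=\sum_i p_i C(|\varphi_i\rangle)$. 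I expect the main obstacle to be handling the direction of inequalities correctly through the convex roof: the per-state concurrence bound must point the right way so that averaging and the subadditivity of the trace norm both cooperate rather than conflict. One must verify that the constant term $\sqrt{(l\alpha^2+1)(l\beta^2+1)}$, being independent of $i$, survives the $\sum_i p_i$ averaging unchanged, and that the Cauchy--Schwarz step is tight enough not to lose the claimed normalization factor $\sqrt{2}/\sqrt{d(d-1)}$.
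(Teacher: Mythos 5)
Your proof is correct and takes essentially the same route as the paper: where the paper simply cites the inequality $C^2(|\varphi\rangle)\ge\frac{8}{d(d-1)}\big(\sum_{i<j}\sqrt{\mu_i\mu_j}\big)^2$ from the concurrence literature, you re-derive it via Cauchy--Schwarz, and the remaining steps (part (1) of Lemma~\ref{le:2}, the optimal convex-roof decomposition, and affineness of $\mathcal{M}_{\alpha,\beta}^{l}$ combined with the triangle inequality for the trace norm) are identical to the paper's. One arithmetic slip: since $C(|\varphi_i\rangle)=2\sqrt{\sum_{k<k'}\mu_k^{(i)}\mu_{k'}^{(i)}}$ (as your own first display shows), the rightmost member of your Cauchy--Schwarz display should be $\sqrt{\tfrac{d(d-1)}{2}}\cdot\tfrac{C(|\varphi_i\rangle)}{2}$ rather than $\sqrt{\tfrac{d(d-1)}{2}}\cdot\tfrac{C(|\varphi_i\rangle)}{\sqrt{2}}$; taken literally, that display would only yield the weaker constant $1/\sqrt{d(d-1)}$, but with the correction your subsequent per-state inequality and the claimed prefactor $\sqrt{2}/\sqrt{d(d-1)}$ follow exactly.
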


\begin{proof}
For any $|\varphi\rangle\in\mathbb{C}^{d_{A}} \otimes \mathbb{C}^{d_{B}}$ with Schmidt form $|\varphi\rangle=\sum\limits_{i=0}^{d-1} \sqrt{\mu_{i}}\left|i_{A} i_{B}\right\rangle$, one has \cite{chen2005concurrence}
	\begin{eqnarray}\notag
		C^{2}(|\varphi\rangle) \geq \frac{8}{d(d-1)}\left(\sum_{0 \leq i<j \leq d-1} \sqrt{\mu_{i} \mu_{j}}\right)^{2}.
	\end{eqnarray}
	From (1) of Lemma \ref{le:2}, we get
	\begin{eqnarray}\notag
			&& \ \ \ \ C(|\varphi\rangle) \\\notag&&\geq \frac{2 \sqrt{2}}{\sqrt{d(d-1)}} \sum_{0 \leq i<j \leq d-1} \sqrt{\mu_{i} \mu_{j}} \\\notag
			&& \geq \!\!\frac{\sqrt{2}}{\sqrt{d(d-1)}}\!\left(\!\left\|\mathcal{M}_{\alpha, \beta}^{l}\!\left(|\varphi\rangle\langle\varphi|\right)\right\|_{t r}\!\!\!\!-\!\sqrt{\left(l \alpha^{2}\!+\!1\right)\!\left(l \beta^{2}\!+\!1\right)}\!\right).
	\end{eqnarray}

Let $\left\{p_{i},\left|\varphi_{i}\right\rangle\right\}$ be the optimal decomposition of $\rho$ such that $C\left(\rho\right)=\sum\limits_{i} p_{i} C\left(\left|\varphi_{i}\right\rangle\right)$. From Lemma 1 we have
	\begin{eqnarray}\notag
		\begin{aligned}
			C\left(\rho\right)&=\sum_{i} p_{i} C\left(\left|\varphi_{i}\right\rangle\right) \\
			&\geq \frac{\sqrt{2}}{\sqrt{d(d-1)}} \sum_{i} p_{i}\left(\left\|\mathcal{M}_{\alpha, \beta}^{l}\left(\left|\varphi_{i}\right\rangle\left\langle\varphi_{i}\right|\right)\right\|_{t r}\right.\\&\left.\quad-\sqrt{\left(l \alpha^{2}+1\right)\left(l \beta^{2}+1\right)}\right) \\
			& \geq \frac{\sqrt{2}}{\sqrt{d(d-1)}}\left(\left\|\mathcal{M}_{\alpha, \beta}^{l}\left(\rho\right)\right\|_{t r}\right.\\&\left.\quad-\sqrt{\left(l \alpha^{2}+1\right)\left(l \beta^{2}+1\right)}\right).
		\end{aligned}
	\end{eqnarray}
Therefore, Theorem \ref{th:2} holds.
\end{proof}

\begin{theorem}\label{th:3} For the CREN of any state $\rho\in\mathbb{C}^{d_{A}} \otimes \mathbb{C}^{d_{B}}$, we have	
\begin{eqnarray}\notag
		\mathcal{N}\left(\rho\right) \geq \frac{\left\|\mathcal{M}_{\alpha, \beta}^{l}\left(\rho\right)\right\|_{t r}-\sqrt{\left(l \alpha^{2}+1\right)\left(l \beta^{2}+1\right)}}{d-1},
	\end{eqnarray}
where $d=\min \left(d_{A}, d_{B}\right)$.
\end{theorem}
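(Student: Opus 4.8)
The plan is to mirror the structure used for Theorem~\ref{th:2}, but to drive the pure-state estimate from the \emph{exact} value of the CREN rather than from a concurrence inequality. First I would settle the pure-state case. For a Schmidt-decomposed pure state $|\varphi\rangle=\sum_{i=0}^{d-1}\sqrt{\mu_{i}}\,|i_{A}i_{B}\rangle$, the partial transpose $(|\varphi\rangle\langle\varphi|)^{T_{B}}$ has the standard eigenvalue structure consisting of the $\mu_{i}$ together with the pairs $\pm\sqrt{\mu_{i}\mu_{j}}$ for $0\le i<j\le d-1$, so that
\begin{eqnarray}\notag
\left\|(|\varphi\rangle\langle\varphi|)^{T_{B}}\right\|_{tr}=\sum_{i=0}^{d-1}\mu_{i}+2\!\!\sum_{0\le i<j\le d-1}\!\!\sqrt{\mu_{i}\mu_{j}}=1+2\!\!\sum_{0\le i<j\le d-1}\!\!\sqrt{\mu_{i}\mu_{j}}.
\end{eqnarray}
Substituting into the definition of $\mathcal{N}(|\varphi\rangle)$ yields the clean identity $\mathcal{N}(|\varphi\rangle)=\frac{2}{d-1}\sum_{0\le i<j\le d-1}\sqrt{\mu_{i}\mu_{j}}$.

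Next I would feed this identity into part~(1) of Lemma~\ref{le:2}. Rearranging that bound gives $2\sum_{0\le i<j\le d-1}\sqrt{\mu_{i}\mu_{j}}\ge\|\mathcal{M}_{\alpha,\beta}^{l}(|\varphi\rangle\langle\varphi|)\|_{tr}-\sqrt{(l\alpha^{2}+1)(l\beta^{2}+1)}$. Dividing by $d-1$ and comparing with the identity above produces the pure-state form of the theorem,
\begin{eqnarray}\notag
\mathcal{N}(|\varphi\rangle)\ge\frac{\|\mathcal{M}_{\alpha,\beta}^{l}(|\varphi\rangle\langle\varphi|)\|_{tr}-\sqrt{(l\alpha^{2}+1)(l\beta^{2}+1)}}{d-1}.
\end{eqnarray}

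Finally I would lift this to arbitrary mixed states exactly as in the proof of Theorem~\ref{th:2}. Choosing an optimal pure-state decomposition $\rho=\sum_{i}p_{i}|\varphi_{i}\rangle\langle\varphi_{i}|$ with $\mathcal{N}(\rho)=\sum_{i}p_{i}\mathcal{N}(|\varphi_{i}\rangle)$, I apply the pure-state inequality termwise. The constant $\sqrt{(l\alpha^{2}+1)(l\beta^{2}+1)}$ survives the convex combination because $\sum_{i}p_{i}=1$, and for the remaining term I invoke the linearity from part~(1) of Lemma~\ref{P:1}, namely $\sum_{i}p_{i}\mathcal{M}_{\alpha,\beta}^{l}(|\varphi_{i}\rangle\langle\varphi_{i}|)=\mathcal{M}_{\alpha,\beta}^{l}(\rho)$, together with the triangle inequality for the trace norm to obtain $\sum_{i}p_{i}\|\mathcal{M}_{\alpha,\beta}^{l}(|\varphi_{i}\rangle\langle\varphi_{i}|)\|_{tr}\ge\|\mathcal{M}_{\alpha,\beta}^{l}(\rho)\|_{tr}$. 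This delivers the stated bound.

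I expect the only genuinely new ingredient to be the pure-state CREN computation; once the identity $\mathcal{N}(|\varphi\rangle)=\frac{2}{d-1}\sum_{i<j}\sqrt{\mu_{i}\mu_{j}}$ is in hand, the rest is the same convexity-plus-linearity packaging already used for the concurrence bound. The point to verify carefully is the eigenvalue count for $(|\varphi\rangle\langle\varphi|)^{T_{B}}$---in particular that the off-diagonal Schmidt pairs contribute symmetric $\pm\sqrt{\mu_{i}\mu_{j}}$ eigenvalues, so their absolute values sum to exactly $2\sum_{i<j}\sqrt{\mu_{i}\mu_{j}}$---since this is precisely what makes the pure-state bound match part~(1) of Lemma~\ref{le:2} with no loss.
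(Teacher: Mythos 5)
Your proposal is correct and follows essentially the same route as the paper's proof: the pure-state identity $\mathcal{N}(|\varphi\rangle)=\frac{2}{d-1}\sum_{0\le i<j\le d-1}\sqrt{\mu_{i}\mu_{j}}$ (which the paper simply cites from Vidal--Werner, while you derive it from the $\pm\sqrt{\mu_{i}\mu_{j}}$ eigenvalue pairs of the partial transpose), combined with part~(1) of Lemma~\ref{le:2}, then lifted to mixed states via the optimal decomposition, the linearity in Lemma~\ref{P:1}, and the triangle inequality for the trace norm. Your eigenvalue computation for $(|\varphi\rangle\langle\varphi|)^{T_{B}}$ is accurate, so no gaps.
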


\begin{proof}
For any pure state $|\varphi\rangle\in\mathbb{C}^{d_{A}} \otimes \mathbb{C}^{d_{B}}$ with Schmidt form $|\varphi\rangle=\sum\limits_{i=0}^{d-1} \sqrt{\mu_{i}}\left|i_{A} i_{B}\right\rangle$, one has \cite{vidal2002computable}
	\begin{eqnarray}\notag
		\mathcal{N}\left(|\varphi\rangle\right)=\frac{2 \sum\limits_{0 \leq i<j \leq d-1} \sqrt{\mu_{i} \mu_{j}}}{d-1}.
	\end{eqnarray}
Using (1) of Lemma \ref{le:2}, we get
	\begin{eqnarray}\notag
		&& 2 \sum_{0 \leq i<j \leq d-1} \sqrt{\mu_{i} \mu_{j}} \\\notag&\geq&\left\|\mathcal{M}_{\alpha, \beta}^{l}\left(|\varphi\rangle\langle\varphi|\right)\right\|_{t r}-\sqrt{\left(l \alpha^{2}+1\right)\left(l \beta^{2}+1\right)}.
	\end{eqnarray}
Let $\left\{p_{i},\left|\varphi_{i}\right\rangle\right\}$ be the optimal pure state decomposition of $\rho$ such that $\mathcal{N}(\rho)=\sum\limits_{i} p_{i} \mathcal{N}\left(\left|\varphi_{i}\right\rangle\right)$. Then using Lemma 1 we obtain
	\begin{eqnarray}\notag
	\begin{aligned}
			\mathcal{N}\left(\rho\right)&=\sum_{i} p_{i} \mathcal{N}\left(\left|\varphi_{i}\right\rangle\right) \\
			&\geq \sum_{i} p_{i} \frac{\left\|\mathcal{M}_{\alpha, \beta}^{l}\left(\left|\varphi_{i}\right\rangle\left\langle\varphi_{i}\right|\right)\right\|_{t r}-\sqrt{\left(l \alpha^{2}+1\right)\!\left(l \beta^{2}+1\right)}}{d-1} \\
			&\geq \frac{\left\|\mathcal{M}_{\alpha, \beta}^{l}\left(\rho\right)\right\|_{t r}-\sqrt{\left(l \alpha^{2}+1\right)\left(l \beta^{2}+1\right)}}{d-1},
		\end{aligned}
	\end{eqnarray}
which completes the proof.
\end{proof}

\begin{example}
The following $3 \times 3$ $P P T$ entangled state was introduced in \cite{bennett1999unextendible},
$$
\rho=\frac{1}{4}\left(I_{9}-\sum_{i=0}^{4}
\left|\varphi_{i}\right\rangle\left\langle\varphi_{i}\right|\right),
$$
where
$$
\begin{aligned}
	& \left|\varphi_{0}\right\rangle=|0\rangle(|0\rangle-|1\rangle) / \sqrt{2} \\
	& \left|\varphi_{1}\right\rangle=(|0\rangle-|1\rangle)|2\rangle / \sqrt{2} \\
	& \left|\varphi_{2}\right\rangle=|2\rangle(|1\rangle-|2\rangle) / \sqrt{2} \\
	& \left|\varphi_{3}\right\rangle=(|1\rangle-|2\rangle)|0\rangle / \sqrt{2} \\
	& \left|\varphi_{4}\right\rangle=(|0\rangle+|1\rangle+|2\rangle)(|0\rangle+|1\rangle+|2\rangle) / 3.
\end{aligned}
$$
Let us consider the mixture of $\rho$ with white noise,
$$
\rho_{t}=\frac{1-t}{9} I_{9}+t \rho.
$$
\end{example}

In Fig.\ref{fig:1}, we take $\alpha=\beta=1$. The solid green line is the bound from Theorem \ref{th:2}, which shows that $\rho_{t}$ is entangled for $0.88248 \leq t \leq\ 1$. The dot dashed orange line is the bound from Theorem $6$ in \cite{shi2023family}, from which $\rho_{t}$ is entangled for $0.88438 \leq t \leq\ 1$. And the dashed blue line is the bound from Theorem in \cite{chen2005concurrence}, from which $\rho_{t}$ is entangled for $0.8897 \leq t \leq\ 1$. Clearly, the bound in Theorem \ref{th:2} is better than the bounds from \cite{shi2023family} and \cite{chen2005concurrence}.

In Fig.\ref{fig:2}, we take $l=2$ and consider different values of $\alpha$ and $\beta$. The solid orange line is the bound from Theorem \ref{th:2} with $\alpha=\beta=10$. The dashed blue line is the bound from Theorem \ref{th:2} with $\alpha=\beta=1$. It is seen that the lower bound of concurrence is better when $\alpha=\beta=10$.

In Fig.\ref{fig:3}, we consider the lower bound of CREN from Theorem \ref{th:3} for the example 3. We take $\alpha=\beta=1$. The solid orange line is the bound from Theorem \ref{th:3} with $l=10$. The dashed blue line is the bound from Theorem $7$ in \cite{shi2023family}. Clearly, the bound in Theorem \ref{th:3} is better than the bound in \cite{shi2023family}.

In Fig.\ref{fig:4}, we take $l=2$ and consider different values of $\alpha$ and $\beta$. The solid orange line is the bound from Theorem \ref{th:3} with $\alpha=\beta=10$. And the dashed blue line is the bound from Theorem \ref{th:3} with $\alpha=\beta=1$. It can be seen that the lower bound of CREN is better when $\alpha=\beta=10$.
\begin{figure}[h]
	\centering
	\includegraphics[scale=0.88]{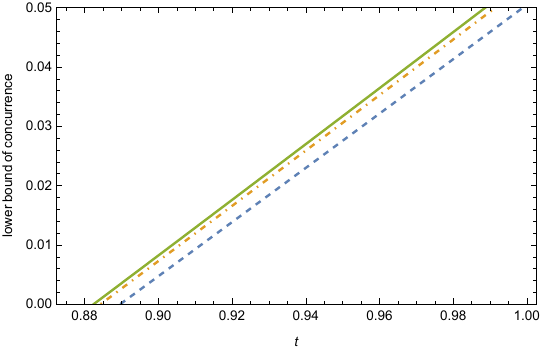}
	\caption{Lower bounds of concurrence for $\rho_{t}$ with $\alpha=\beta=1$, solid green line for the bound from Theorem \ref{th:2}, dot dashed orange line for the bound from Theorem $6$ in \cite{shi2023family}, dashed blue line for the bound from Theorem in \cite{chen2005concurrence}.}
	\label{fig:1}
\end{figure}
\begin{figure}[h]
	\centering
	\includegraphics[scale=0.88]{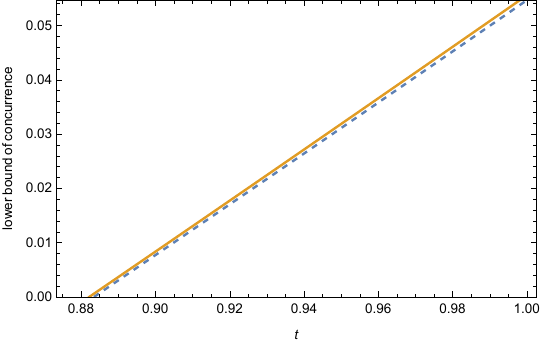}
	\caption{Lower bounds of concurrence for $\rho_{t}$ with $l=2$, solid orange (dashed blue) line for the bound from Theorem \ref{th:2} with $\alpha=\beta=10$ ($\alpha=\beta=1$).}
	\label{fig:2}
\end{figure}
\begin{figure}[h]
	\centering
	\includegraphics[scale=0.88]{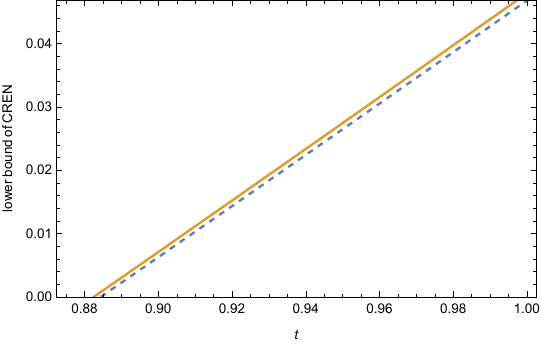}
	\caption{Lower bounds of CREN for $\rho_{t}$ with $\alpha=\beta=1$, solid orange line for the bound from Theorem \ref{th:3} with $l=10$, dashed blue line for the bound from Theorem $7$ in \cite{shi2023family}.}
	\label{fig:3}
\end{figure}
\begin{figure}[h]
	\centering
	\includegraphics[scale=0.88]{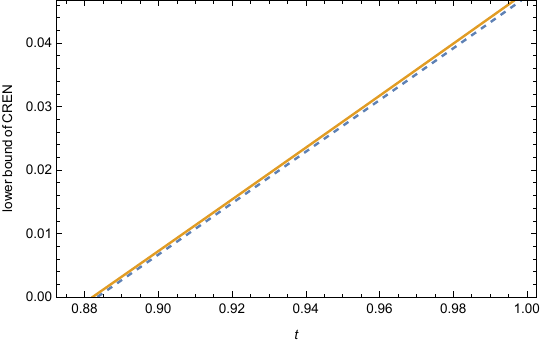}
	\caption{Lower bounds of CREN for $\rho_{t}$ with $l=2$, solid orange (dashed blue) line for the bound from Theorem \ref{th:3} with $\alpha=\beta=10$ ($\alpha=\beta=1$).}
	\label{fig:4}
\end{figure}

\section{Detection and measures of multipartite entanglement}\label{S:3}
\subsection{Separability criteria for multipartite states}

We first consider tripartite case. Denote the three bipartitions of a tripartite quantum state $\rho\in\mathbb{C}^{d_{1}} \otimes \mathbb{C}^{d_{2}} \otimes \mathbb{C}^{d_{3}}$ as $1|23$, $2|13$ and $3|12$. If a tripartite state $\rho$ is biseparable \cite{jing2022criteria}, then
\begin{eqnarray}\notag
	\begin{aligned}
	&\rho=\sum_i p_i|\varphi_i\rangle^{1|23}\langle\varphi_i|+\sum_j p_j|\varphi_j\rangle^{2|13}\langle\varphi_j|\\&\quad\quad+\sum_k p_k|\varphi_k\rangle^{3|12}\langle\varphi_k|,
	\end{aligned}
\end{eqnarray}
where $p_i, p_j, p_k \geq 0$, with $\sum\limits_i p_i+\sum\limits_j p_j+\sum\limits_k p_k=1$. Otherwise, $\rho$ is called genuinely tripartite entangled.
We define
\begin{eqnarray}\label{M:1} \mathscr{M}(\rho)\!\!=\!\frac{1}{3}\left(\left\|\mathscr{M}_{1|23}
\!\left(\rho\right)\!\right\|_{t r}\!\!+\!\left\|\mathscr{M}_{2|13}\!\left(\rho\right)\!\right\|_{t r}\!\!+\!\left\|\mathscr{M}_{3|12}\!\left(\rho\right)\!\right\|_{t r}\!\right),
\end{eqnarray}
where $\mathscr{M}_{i|jk}$ stands for the matrix (\ref{eq:m1}) under bipartition $i$ and $jk$, $\{i, j, k\}=\{1,2,3\}$.

\begin{theorem}\label{th:6}
If a tripartite state $\rho\in\mathbb{C}^{d_{1}} \otimes \mathbb{C}^{d_{2}} \otimes \mathbb{C}^{d_{3}}$ is biseparable, then
	\begin{eqnarray}\notag
		\mathscr{M}(\rho) \leq \sqrt{\left(l \alpha^{2}+1\right)\left(l \beta^{2}+1\right)}+\frac{2(d-1)}{3},
	\end{eqnarray}
where $d_i=d$ $\left(i=1,2,3\right)$.
\end{theorem}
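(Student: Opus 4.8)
The plan is to reduce the claim to the case of a single pure state that factorizes across one bipartition, and then to exploit the asymmetry between that matching cut and the other two. Write the biseparable $\rho$ as one convex sum $\rho=\sum_m q_m\sigma_m$, with $q_m\ge 0$, $\sum_m q_m=1$, where each $\sigma_m=|\varphi_m\rangle\langle\varphi_m|$ is a pure state that is a product across one of the three cuts. Applying part $(1)$ of Lemma \ref{P:1} to each of the three maps $\mathscr{M}_{i|jk}$ separately (treating $\rho$ as bipartite under that cut) gives $\mathscr{M}_{i|jk}(\rho)=\sum_m q_m\,\mathscr{M}_{i|jk}(\sigma_m)$, and the triangle inequality for the trace norm yields $\|\mathscr{M}_{i|jk}(\rho)\|_{tr}\le\sum_m q_m\|\mathscr{M}_{i|jk}(\sigma_m)\|_{tr}$. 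Averaging over the three cuts, $\mathscr{M}(\rho)\le\sum_m q_m\,\mathscr{M}(\sigma_m)$, so it suffices to bound $\mathscr{M}(\sigma)$ for a single pure product $\sigma$.

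Fix such a $\sigma=|\varphi\rangle\langle\varphi|$ with $|\varphi\rangle=|a\rangle_1\otimes|\psi\rangle_{23}$, a product across $1|23$. Across that very cut $\sigma$ is a pure separable state, so Theorem \ref{th:1} applies directly and gives $\|\mathscr{M}_{1|23}(\sigma)\|_{tr}\le\sqrt{(l\alpha^2+1)(l\beta^2+1)}$. Across the remaining cuts $\sigma$ need not be separable, but it is still a pure bipartite state, so part $(2)$ of Lemma \ref{le:2} applies. Here the dimension count is essential: since $d_1=d_2=d_3=d$, every bipartition splits the system into one factor of dimension $d$ and one of dimension $d^2$, whence $d_{\min}=\min(d,d^2)=d$ on all three cuts, and Lemma \ref{le:2}$(2)$ returns $\|\mathscr{M}_{2|13}(\sigma)\|_{tr}\le\sqrt{(l\alpha^2+1)(l\beta^2+1)}+(d-1)$ and likewise for $3|12$. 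Summing the three estimates and dividing by $3$ gives $\mathscr{M}(\sigma)\le\sqrt{(l\alpha^2+1)(l\beta^2+1)}+\frac{2(d-1)}{3}$; by symmetry the same bound holds when $\sigma$ is a product across $2|13$ or $3|12$.

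Combining the two reductions, $\mathscr{M}(\rho)\le\sum_m q_m\bigl(\sqrt{(l\alpha^2+1)(l\beta^2+1)}+\frac{2(d-1)}{3}\bigr)=\sqrt{(l\alpha^2+1)(l\beta^2+1)}+\frac{2(d-1)}{3}$, because $\sum_m q_m=1$, which is the assertion. The conceptual heart of the argument is the mismatch used in the second paragraph: for every pure building block exactly one of the three cuts is the product cut contributing the tight separable bound of Theorem \ref{th:1}, while each of the other two contributes the weaker pure-state bound of Lemma \ref{le:2}$(2)$, carrying one extra factor $(d-1)$. The main obstacle is thus structural rather than analytic, namely verifying that $d_{\min}=d$ holds uniformly across all three bipartitions — which is exactly what forces the hypothesis $d_i=d$ — so that the additive defect is precisely $(d-1)$ on each non-matching cut and the average produces the stated $\frac{2(d-1)}{3}$. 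One should finally check that Lemma \ref{P:1}$(1)$ is being applied to a genuine convex combination with coefficients summing to one, which it is.
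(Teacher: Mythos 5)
Your proposal is correct and follows essentially the same route as the paper: reduce to pure product states via Lemma \ref{P:1}(1) and the triangle inequality, then apply Theorem \ref{th:1} on the matching product cut and Lemma \ref{le:2}(2) on the two non-matching cuts, and average. The only difference is that you spell out explicitly the dimension count $\min(d,d^2)=d$ on the non-matching cuts, which the paper leaves implicit.
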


\begin{proof}
Let $|\varphi\rangle\in\mathbb{C}^{d_{1}} \otimes \mathbb{C}^{d_{2}} \otimes \mathbb{C}^{d_{3}}$ be biseparable under bipartition $1|23$, i.e., $|\varphi\rangle=\left|\varphi_{1}\right\rangle \otimes\left|\varphi_{23}\right\rangle$. From Theorem \ref{th:1} and (2) of Lemma \ref{le:2} we have	
\begin{eqnarray}\label{M:2}
		\begin{aligned}
			& \quad\mathscr{M}(|\varphi\rangle\langle\varphi|)\\
			& =\frac{1}{3}\left(\left\|\mathscr{M}_{1|23}\left(|\varphi\rangle\langle\varphi|\right)\right\|_{t r}+\left\|\mathscr{M}_{2|13}\left(|\varphi\rangle\langle\varphi|\right)\right\|_{t r}\right.\\&\left.\quad+\left\|\mathscr{M}_{3|12}\left(|\varphi\rangle\langle\varphi|\right)\right\|_{t r}\right) \\
			& \leq \sqrt{\left(l \alpha^{2}+1\right)\left(l \beta^{2}+1\right)}+\frac{2(d-1)}{3},
		\end{aligned}
	\end{eqnarray}
which holds also for biseparable states under bipartitions $2|13$ and $3|12$.
	
For biseparable mixed state, $\rho=\sum_{i} p_{i}\left|\varphi_{i}\right\rangle\left\langle\varphi_{i}\right|$, $p_{i} \geq 0$, $\sum_{i} p_{i}=1$, we have
	\begin{eqnarray}\notag
		\mathscr{M}(\rho)\leq\sum_{i} p_{i}\mathscr{M}(|\varphi_{i}\rangle\langle\varphi_{i}|),
	\end{eqnarray}
which gives rise to that
	\begin{eqnarray}\notag
		\begin{aligned}
			\mathscr{M}(\rho)
			& \leq \sum_{i} p_{i}\left( \sqrt{\left(l \alpha^{2}+1\right)\left(l \beta^{2}+1\right)}+\frac{2(d-1)}{3}\right) \\
			& =\sqrt{\left(l \alpha^{2}+1\right)\left(l \beta^{2}+1\right)}+\frac{2(d-1)}{3}.
		\end{aligned}
	\end{eqnarray}
by using from (\ref{M:2}).
\end{proof}

We provide an example to illustrate the Theorem \ref{th:6}.
\begin{example}
Consider the state $\rho_{W}^{q}$ in $\mathbb{C}^3 \otimes \mathbb{C}^3 \otimes \mathbb{C}^3$,
	$$
	\rho_{W}^{q}=\frac{1-q}{27} I_{27}+q\left|\varphi_{W}\right\rangle\left\langle\varphi_{W}\right|,
	$$
	where $\left|\varphi_{W}\right\rangle=\frac{1}{\sqrt{6}}(|001\rangle+|010\rangle+|100\rangle+|112\rangle+|121\rangle+|211\rangle)$, $I_{27}$ is the $27 \times 27$ identity matrix.
\end{example}

In this case $d=3$. When $\alpha=\beta= 1$ and $l=2$, Theorem \ref{th:6} detects the genuine tripartite entanglement of $\rho_{W}^{q}$ for $0.805211 \leq q \leq 1$. When $\alpha=\beta= 1$ and $l=1$, Theorem \ref{th:6} reduces to the Theorem $2$ in \cite{qi2024detection}, which detects the genuine tripartite entanglement of $\rho_{W}^{q}$ for $0.805321 \leq q \leq 1$. The Theorem $1$ in \cite{de2011multipartite} detects the genuine tripartite entanglement of $\rho_{W}^{q}$ for $0.917663 \textless q \leq 1$. Obviously, our Theorem \ref{th:6} is more effective in detecting the genuine tripartite entanglement. $\\$

Next we consider the fully separability of general multipartite states. Any multipartite state $\rho\in\mathbb{C}^{d_{1}} \otimes \mathbb{C}^{d_{2}} \otimes \cdots \otimes \mathbb{C}^{d_{n}}$ can be written as
\begin{eqnarray}\notag
	\rho=\sum_{i} Y_{1}^{i} \otimes Y_{2}^{i} \otimes \cdots \otimes Y_{n}^{i},
\end{eqnarray}
where $Y_{j}^{i} \in \mathbb{C}^{d_{j} \times d_{j}}$, $j=1,2, \cdots, n$. We define
\begin{eqnarray}\label{eq:13}\notag
&&\mathcal{M} \mathcal{R}_{\alpha_{q}, \ldots, \alpha_{n}}^{l, q}(\rho)\\&=&\sum_{i} Y_{1}^{i} \otimes \!\cdots \!\otimes Y_{q-1}^{i} \otimes\! \mathcal{M}_{\alpha_{q}, \ldots, \alpha_{n}}^{l}\left[Y_{q}^{i}, \ldots, Y_{n}^{i}\right],
\end{eqnarray}
where $q=1,\ldots,n-1$, $\alpha_{q}, \ldots, \alpha_{n}$ are nonnegative real numbers, $l$ is a natural number,
\begin{eqnarray}\notag
	&&\mathcal{M}_{\alpha_{q}, \ldots, \alpha_{n}}^{l}\left[Y_{q}^{i}, \ldots, Y_{n}^{i}\right]\\\notag&=&\left(\begin{array}{c}
		\alpha_{q} E_{l \times 1} \\
		\operatorname{vec}\left(Y_{q}^{i}\right)
	\end{array}\right) \bigotimes_{j=0}^{n-q-1}\left(\alpha_{n-j} E_{1 \times l} \quad \operatorname{vec}\left(Y_{n-j}^{i}\right)^{T}\right),
\end{eqnarray}
with $E_{l \times 1} \in \mathbb{C}^{l}$ a vector with all elements being 1 and $E_{1 \times l}=E_{l \times 1}^{T}$.
We have the following separability criterion for multipartite states.

\begin{theorem}\label{th:4}
If a multipartite state  $\rho\in\mathbb{C}^{d_{1}} \otimes \mathbb{C}^{d_{2}} \otimes \cdots \otimes \mathbb{C}^{d_{n}}$ is fully separable, then for any $1\leq{q}\leq{n}-1$,
\begin{eqnarray}\notag
\left\|\mathcal{M} \mathcal{R}_{\alpha_{q}, \ldots, \alpha_{n}}^{l, q}(\rho)\right\|_{t r} \leq \prod_{k=q}^{n} \sqrt{l \alpha_{k}^{2}+1},
	\end{eqnarray}
where $\mathcal{M} \mathcal{R}_{\alpha_{q},  \ldots, \alpha_{n}}^{l, q}(\rho)$ is defined in (\ref{eq:13}).
\end{theorem}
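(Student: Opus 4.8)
The plan is to reduce the multipartite bound to the single rank-one computation that already appears in the proof of Theorem~\ref{th:1}, using that the trace norm is multiplicative under Kronecker products. First I would use full separability to write $\rho=\sum_i p_i\,\rho_1^i\otimes\cdots\otimes\rho_n^i$ with $p_i\geq0$, $\sum_i p_i=1$, and every $\rho_j^i$ a \emph{pure} density matrix on $\mathbb{C}^{d_j}$. Feeding this decomposition into the definition (\ref{eq:13}) and factoring the scalars $p_i$ out front gives
\[
\mathcal{M}\mathcal{R}_{\alpha_q,\ldots,\alpha_n}^{l,q}(\rho)=\sum_i p_i\,\rho_1^i\otimes\cdots\otimes\rho_{q-1}^i\otimes\mathcal{M}_{\alpha_q,\ldots,\alpha_n}^l[\rho_q^i,\ldots,\rho_n^i],
\]
so that the triangle inequality bounds $\|\mathcal{M}\mathcal{R}_{\alpha_q,\ldots,\alpha_n}^{l,q}(\rho)\|_{tr}$ by $\sum_i p_i$ times the trace norm of each product summand.

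I would then evaluate one such summand. Since $\|A\otimes B\|_{tr}=\|A\|_{tr}\,\|B\|_{tr}$ and each $\rho_j^i$ has $\|\rho_j^i\|_{tr}=\operatorname{tr}(\rho_j^i)=1$, the $q-1$ leading tensor factors contribute only a factor of $1$, so the summand's trace norm collapses to $\|\mathcal{M}_{\alpha_q,\ldots,\alpha_n}^l[\rho_q^i,\ldots,\rho_n^i]\|_{tr}$. The crucial structural point is that, by its very definition, this block is the Kronecker product of the single column $(\alpha_q E_{l\times1},\,\operatorname{vec}(\rho_q^i))^T$ with the rows $(\alpha_k E_{1\times l},\,\operatorname{vec}(\rho_k^i)^T)$ for $k=q+1,\ldots,n$, exactly generalizing the outer-product rewriting used in the bipartite case. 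A Kronecker product of rank-one vectors is again rank one, so its trace norm equals the product of the Euclidean norms of the constituent vectors.

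Finally I would compute those Euclidean norms via purity: $\|\operatorname{vec}(\rho_k^i)\|_2^2=\operatorname{tr}((\rho_k^i)^2)=1$, so the column contributes $\sqrt{l\alpha_q^2+1}$ and each row contributes $\sqrt{l\alpha_k^2+1}$, whence $\|\mathcal{M}_{\alpha_q,\ldots,\alpha_n}^l[\rho_q^i,\ldots,\rho_n^i]\|_{tr}=\prod_{k=q}^n\sqrt{l\alpha_k^2+1}$, a quantity independent of $i$. Summing against $\sum_i p_i=1$ then delivers the stated inequality. The main obstacle is organizational rather than computational: one must handle the Kronecker product of one column against several rows and confirm its rank-one trace norm, and one must factor the weights $p_i$ out cleanly, since the affine entries $\alpha_k E$ keep $\mathcal{M}_{\alpha_q,\ldots,\alpha_n}^l[\,\cdots]$ from being genuinely linear in its arguments. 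Once the block is recognized as that single outer product, the rest is the purity calculation already carried out in the proof of Theorem~\ref{th:1}.
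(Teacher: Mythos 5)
Your proof is correct and follows the paper's own argument essentially step for step: decompose the fully separable state into (pure) product terms, apply the triangle inequality, use multiplicativity of the trace norm over the Kronecker factors so the leading $q-1$ factors contribute $\operatorname{tr}(\rho_j^i)=1$, and evaluate the remaining rank-one block's trace norm via purity, $\|\operatorname{vec}(\rho_k^i)\|_2^2=\operatorname{tr}((\rho_k^i)^2)=1$, giving $\prod_{k=q}^n\sqrt{l\alpha_k^2+1}$. The only difference is that you make explicit the purity of the factors and the rank-one outer-product structure that the paper uses implicitly.
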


\begin{proof}
Since any fully separable state $\rho$ can be written as
$\rho=\sum_{i} p_{i} \rho_{1}^{i} \otimes \rho_{2}^{i} \otimes \cdots \otimes \rho_{n}^{i}$,
where $p_{i} \in[0,1]$ with $\sum\limits_{i} p_{i}=1$, we have
\begin{eqnarray}\notag
\begin{aligned}
&\quad\left\|\mathcal{M} \mathcal{R}_{\alpha_{q}, \ldots, \alpha_{n}}^{l, q}(\rho)\right\|_{t r}\\&=\left\|\sum\limits_{i} p_{i} \rho_{1}^{i} \otimes \cdots \otimes \rho_{q-1}^{i} \otimes \mathcal{M}_{\alpha_{q}, \cdots, \alpha_{n}}^{l}\left[\rho_{q}^{i}, \ldots, \rho_{n}^{i}\right]\right\|_{t r}  \\
			&\leq \sum\limits_{i}p_{i}\left\|\rho_{1}^{i} \otimes \cdots \otimes \rho_{q-1}^{i} \otimes \mathcal{M}_{\alpha_{q}, \ldots, \alpha_{n}}^{l}\left[\rho_{q}^{i}, \ldots, \rho_{n}^{i}\right]\right\|_{t r}
			\\&=\sum\limits_{i} p_{i}\operatorname{tr}\left(\rho_{1}^{i}\right) \cdots\operatorname{tr}\left(\rho_{q-1}^{i}\right)\prod_{k=q}^{n}\left\|\left(\begin{array}{c}
				\alpha_{k} E_{l \times 1} \\
				\operatorname{vec}\left(\rho_{k}^{i}\right)
			\end{array}\right)\right\|_{t r}\\
			&=\sum\limits_{i} p_{i}\prod_{k=q}^{n} \sqrt{l \alpha_{k}^{2}+1}\\&=\prod_{k=q}^{n} \sqrt{l \alpha_{k}^{2}+1},
		\end{aligned}
	\end{eqnarray}
which proves Theorem \ref{th:4}.
\end{proof}

The following example illustrates the power of the Theorem \ref{th:4}.

\begin{example}\label{ex:5}
	Consider the quantum state $\rho_{G H Z}^{p}$  \cite{gittsovich2010multiparticle},
$$
\rho_{G H Z}^{p}=\frac{1-p}{8} I_{8}+p\left|\psi_{G H Z}\right\rangle\left\langle\psi_{G H Z}\right|,
$$
where $0 \leq p \leq 1$ and
$\left|\psi_{G H Z}\right\rangle=\frac{1}{x}(|000\rangle+\varepsilon|110\rangle+|111\rangle)$
with the real parameter $\varepsilon$ and the normalization factor $x$.
\end{example}	
	
Set $\alpha_{i}=10^{-1}$ $(i=1,2, 3)$, $l=2$ and $q=1$. We use Theorem \ref{th:4} to detect the entanglement of state $\rho_{G H Z}^{p}$ and compare the results with Theorem $2.1$ in \cite{zhang2017realignment} and Theorem $3.1$ in \cite{shen2022optimization}. From Table \ref{tab:4}, it is seen that our Theorem \ref{th:4} is more efficient than the Theorem $2.1$ in \cite{zhang2017realignment} and the Theorem $3.1$ in \cite{shen2022optimization} for different values of $\varepsilon$.
	\setlength{\tabcolsep}{15pt}
	\begin{center}
		\begin{table*}[ht]
\caption{Entanglement of state $\rho_{G H Z}^{p}$ in Example \ref{ex:5} with respect to different values of $\varepsilon$}
			\label{tab:4}
			\begin{tabular}{cccc}
				\hline\noalign{\smallskip}
			$\varepsilon$ &  Theorem $2.1$ in \cite{zhang2017realignment} & Theorem $3.1$ in \cite{shen2022optimization} & Theorem \ref{th:4} \\
			\noalign{\smallskip}\hline\noalign{\smallskip}
			$10^{-1}$ & $0.4118\leq p \leq 1$ & $0.4039 \leq p \leq 1$ & $0.4026 \leq p \leq 1$ \\
			
			1 & $0.4256 \leq p \leq 1$ & $0.4200 \leq p \leq 1$ & $0.4194 \leq p \leq 1$ \\
			
			10 & $0.7727 \leq p \leq 1$ & $0.7665 \leq p \leq 1$ & $0.7652 \leq p \leq 1$ \\
			\noalign{\smallskip}\hline
			\end{tabular}
		\end{table*}
	\end{center}

\subsection{Lower bounds of GME concurrence for tripartite systems}
The GME concurrence of a pure state $|\varphi\rangle\in\mathbb{C}^{d} \otimes \mathbb{C}^{d} \otimes \mathbb{C}^{d}$ is defined by \cite{ma2011measure},
\begin{eqnarray}\notag
	 C_{G M E}(|\varphi\rangle)=\sqrt{\!\min \left\{1-\operatorname{tr}\left(\rho_{1}^{2}\right), 1-\operatorname{tr}\left(\rho_{2}^{2}\right), 1-\operatorname{tr}\left(\rho_{3}^{2}\right)\right\}},
\end{eqnarray}
where $\rho_{i}$ is the reduced density matrix of subsystem $i$.
The GME concurrence of a mixed state $\rho$ is defined as
\begin{eqnarray}\label{GME:1}
	C_{G M E}(\rho)=\min _{\left\{p_{i},\left|\varphi_{i}\right\rangle\right\}} \sum_{i} p_{i} C_{G M E}\left(\left|\varphi_{i}\right\rangle\right),
\end{eqnarray}
where the minimum is taken over all possible ensemble decompositions of
$\rho=\sum\limits_{i}p_{i}\left|\varphi_{i}\right\rangle\left\langle\varphi_{i}\right|$, $p_{i} \geqslant 0$ with $\quad\sum\limits_{i}p_{i}=1$.
We have the following theorem about the lower bound of GME concurrence, see proof in Appendix C.

\begin{theorem}\label{th:7}
For any tripartite state $\rho\in\mathbb{C}^{d} \otimes \mathbb{C}^{d} \otimes \mathbb{C}^{d}$, we have
\begin{eqnarray}\notag
		\begin{aligned}
			C_{G M E}(\rho)&\geq \frac{1}{\sqrt{d(d-1)}}{\bigg(\mathscr{M}(\rho)\bigg.}\\&{\bigg.-\sqrt{\left(l \alpha^{2}+1\right)\left(l \beta^{2}+1\right)}-\frac{2(d-1)}{3}\bigg)},
		\end{aligned}
\end{eqnarray}
where $\mathscr{M}(\rho)$ is defined in (\ref{M:1}).
\end{theorem}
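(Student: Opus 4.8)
The plan is to follow the standard convex-roof strategy: first establish the inequality for pure states, then lift it to mixed states via an optimal decomposition together with the linearity of $\mathscr{M}_{i|jk}$ (part (1) of Lemma \ref{P:1}) and the triangle inequality for the trace norm. Throughout I abbreviate $K=\sqrt{(l\alpha^{2}+1)(l\beta^{2}+1)}$.

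First I treat a pure state $|\varphi\rangle$. Fix a bipartition $i|jk$ and let $\mu_{0}^{(i)},\dots,\mu_{d-1}^{(i)}$ be the squared Schmidt coefficients across this cut, so that the reduced state $\rho_{i}$ has eigenvalues $\mu_{s}^{(i)}$ and $1-\operatorname{tr}(\rho_{i}^{2})=2\sum_{s<t}\mu_{s}^{(i)}\mu_{t}^{(i)}$. Applying Cauchy--Schwarz to the $\binom{d}{2}$ numbers $\sqrt{\mu_{s}^{(i)}\mu_{t}^{(i)}}$ gives $(\sum_{s<t}\sqrt{\mu_{s}^{(i)}\mu_{t}^{(i)}})^{2}\le \frac{d(d-1)}{2}\sum_{s<t}\mu_{s}^{(i)}\mu_{t}^{(i)}$, hence $2\sum_{s<t}\sqrt{\mu_{s}^{(i)}\mu_{t}^{(i)}}\le \sqrt{d(d-1)}\,\sqrt{1-\operatorname{tr}(\rho_{i}^{2})}$. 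Since each cut $i|jk$ has Schmidt rank $\min(d,d^{2})=d$, part (1) of Lemma \ref{le:2} applies and yields $\|\mathscr{M}_{i|jk}(|\varphi\rangle\langle\varphi|)\|_{tr}-K\le \sqrt{d(d-1)}\,\sqrt{1-\operatorname{tr}(\rho_{i}^{2})}$ for each of the three bipartitions.

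The crux is then to pass from these three per-cut estimates to a bound on $C_{GME}(|\varphi\rangle)=\min_{i}\sqrt{1-\operatorname{tr}(\rho_{i}^{2})}$. Let $i^{*}$ achieve the minimum. For $i^{*}$ I use the sharp estimate above, so that $\|\mathscr{M}_{i^{*}|jk}(|\varphi\rangle\langle\varphi|)\|_{tr}-K\le \sqrt{d(d-1)}\,C_{GME}(|\varphi\rangle)$, while for the other two cuts I use the coarse bound $\|\mathscr{M}_{i|jk}(|\varphi\rangle\langle\varphi|)\|_{tr}-K\le d-1$ from part (2) of Lemma \ref{le:2}. Writing $\mathscr{M}(|\varphi\rangle\langle\varphi|)-K-\frac{2(d-1)}{3}=\frac{1}{3}\sum_{i}\big(\|\mathscr{M}_{i|jk}(|\varphi\rangle\langle\varphi|)\|_{tr}-K\big)-\frac{2(d-1)}{3}$, the two non-minimal terms contribute at most $\frac{2(d-1)}{3}$, which is exactly cancelled by the subtracted constant; what remains is at most $\frac{1}{3}\big(\|\mathscr{M}_{i^{*}|jk}(|\varphi\rangle\langle\varphi|)\|_{tr}-K\big)\le \frac{1}{3}\sqrt{d(d-1)}\,C_{GME}(|\varphi\rangle)\le \sqrt{d(d-1)}\,C_{GME}(|\varphi\rangle)$. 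Dividing by $\sqrt{d(d-1)}$ establishes the theorem for pure states. I expect this \emph{min} bookkeeping to be the main obstacle: the reason the constant $\frac{2(d-1)}{3}$ appears, and the reason both parts of Lemma \ref{le:2} are needed (the sharp bound for the minimizing cut, the crude bound for the remaining two), is precisely to control the two bipartitions that do not realize the minimum.

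Finally, for a mixed $\rho$ I take an optimal decomposition $\rho=\sum_{i}p_{i}|\varphi_{i}\rangle\langle\varphi_{i}|$ with $C_{GME}(\rho)=\sum_{i}p_{i}C_{GME}(|\varphi_{i}\rangle)$. By part (1) of Lemma \ref{P:1} each $\mathscr{M}_{i|jk}$ is linear in the state, so the triangle inequality gives $\|\mathscr{M}_{i|jk}(\rho)\|_{tr}\le \sum_{i}p_{i}\|\mathscr{M}_{i|jk}(|\varphi_{i}\rangle\langle\varphi_{i}|)\|_{tr}$; averaging over the three cuts yields $\mathscr{M}(\rho)\le \sum_{i}p_{i}\mathscr{M}(|\varphi_{i}\rangle\langle\varphi_{i}|)$. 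Inserting the pure-state bound and using $\sum_{i}p_{i}=1$ then gives $C_{GME}(\rho)\ge \frac{1}{\sqrt{d(d-1)}}\big(\mathscr{M}(\rho)-K-\frac{2(d-1)}{3}\big)$, which is the claimed inequality.
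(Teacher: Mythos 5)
Your proof is correct and follows essentially the same route as the paper's: a sharp per-cut estimate (Lemma 2(1) combined with Cauchy--Schwarz, which is exactly the content of Theorem 2) for the bipartition realizing the minimum, the coarse bound $\|\mathscr{M}_{i|jk}\|_{tr}-K\le d-1$ for the other two cuts to absorb the $\frac{2(d-1)}{3}$ offset, and the standard convex-roof lifting via Lemma 1(1). The only cosmetic differences are that the paper obtains the coarse bound from $\sqrt{1-\operatorname{tr}(\rho_i^{2})}\le\sqrt{1-1/d}$ rather than invoking Lemma 2(2) directly, and proves the resulting inequality for each of the three reduced states instead of singling out the minimizer.
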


\begin{example}
Consider the following state $\rho_{x}\in\mathbb{C}^2 \otimes \mathbb{C}^2 \otimes \mathbb{C}^2$,
	$$
	\rho_{x}=\frac{x}{8} I_{8}+(1-x)\left|\phi_{G H Z}\right\rangle\left\langle\phi_{G H Z}\right|, 0 \leq x \leq 1,
	$$
where $\left|\phi_{G H Z}\right\rangle=\frac{1}{\sqrt{2}}(|000\rangle+|111\rangle)$, $I_{8}$ is the $8 \times 8$ identity matrix.
\end{example}

Set $\alpha=\beta= 1$ and $l=2$. From Theorem \ref{th:7} we get $C_{G M E}(\rho_{x}) \geq\frac{3\sqrt{2}\left(1-x\right)}{4}+\frac{\sqrt{5\left(x^2-2x+10\right)}}{4}
-\frac{11\sqrt{2}}{6}$. Fig.\ref{fig:5} shows that $\rho_{x}$ is genuine tripartite entangled for $0 \leq x \leq 0.192912$. When $\alpha=\beta= 1$ and $l=1$, Theorem \ref{th:7} reduces to the Theorem $3$ in \cite{qi2024detection}, which detects the genuine tripartite entanglement of $\rho_{x}$ for $0 \leq x \textless 0.192758$. The Corollary $1$ in \cite{zhao2023detecting} shows that $\rho_{x}$ is genuine tripartite entangled for $0 \leq x \textless 0.1919$. It can be seen that our Theorem \ref{th:7} detects better the genuine tripartite entanglement.
\begin{figure}[h]
	\centering
	\includegraphics[scale=0.88]{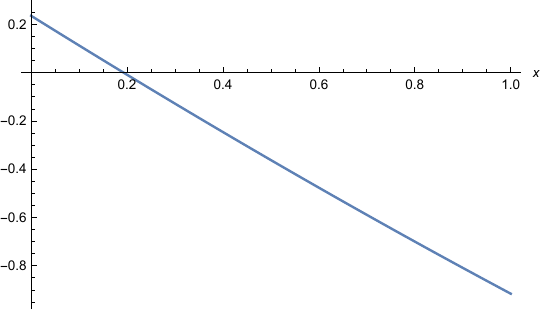}
	\caption{Lower bound of GME concurrence for $\rho_{x}$}
	\label{fig:5}
\end{figure}

\section{Conclusions and discussions}
The detection and estimation of quantum entanglement are of great significance in the quantum information processing. By constructing matrices based on the realignment of density matrices and the vectorization of the reduced density matrices, we have presented a family of separability criteria for both bipartite and multipartite systems. From these criteria we have derived new lower bounds of concurrence and convex-roof extended negativity. As for tripartite systems, we have also obtained the criteria to detect the genuine tripartite entanglement, and the lower bounds of the GME concurrence. We have shown by detailed examples that our separability criteria are more efficient than the known realignment criteria \cite{chen2002matrix} and the separability criterion given in \cite{shi2023family}. For multipartite cases, examples show that our criteria detect genuine tripartite entanglement and multipartite fully separable states better than the ones in \cite{qi2024detection,de2011multipartite,zhang2017realignment,shen2022optimization,zhao2023detecting}.
\bigskip
\section*{ACKNOWLEDGMENTS}
This work is supported by JCKYS2024604SSJS001, JCKYS2023604SSJS017, G2022180019L, the National Natural Science Foundation of China (NSFC) under Grants 12075159 and 12171044, and the specific research fund of the Innovation Platform for Academicians of Hainan Province under Grant No. YSPTZX202215.

\bigskip
\section*{APPENDIX}
\setcounter{equation}{0} \renewcommand%
\theequation{A\arabic{equation}}
\subsection{Proof of Lemma 1}
\begin{proof}
(1) The vectorization of matrices has the following properties,
\begin{eqnarray}\label{eq:v2}
\operatorname{vec}\left(k_{1} A+k_{2} B\right)=k_{1} \operatorname{vec}(A)+k_{2}\operatorname{vec}(B)
\end{eqnarray}
for any $A, B \in \mathbb{C}^{m \times n}$, $k_{i} \in \mathbb{R}$ $(i=1,2)$ and
\begin{eqnarray}\label{eq:v3}
\operatorname{vec}(A B C)=\left(C^{T} \otimes A\right) \operatorname{vec}(B)
\end{eqnarray}
for any $A \in \mathbb{C}^{m \times n}$, $B \in \mathbb{C}^{n \times p}$ and $C \in \mathbb{C}^{p \times q}$.

From (\ref{eq:v2}), it yields that for any $A, B \in \mathbb{C}^{mn \times mn}$, $k_{i} \in \mathbb{R}$ $(i=1,2)$,
\begin{eqnarray}\label{eq:r2}
\mathcal{R}\left(k_{1} A+k_{2} B\right)=k_{1} \mathcal{R}(A)+k_{2} \mathcal{R}(B).
\end{eqnarray}

Clearly, from (\ref{eq:v2}) for any $A, B \in \mathbb{C}^{m \times n}$, $k_{i} \in \mathbb{R}$ $(i=1,2)$, we have
\begin{eqnarray}\label{eq:w2}
	\omega_{l}\left(k_{1} A+k_{2} B\right)=k_{1} \omega_{l}(A)+k_{2} \omega_{l}(B).
\end{eqnarray}

From the definition of $\mathcal{M}_{\alpha,\beta}^{l}\left(\rho\right)$ in (\ref{eq:m1}), (\ref{eq:r2}) and (\ref{eq:w2}), we get
	\begin{eqnarray}\notag
			&& \mathcal{M}_{\alpha, \beta}^{l}\left(\sum_{i=1}^{n} k_{i} \rho_{i}\right)\\ &=&\left(\begin{array}{cc}
				\alpha \beta E_{l \times l} & \alpha \omega_{l}\left(\operatorname{tr}_{A}\left(\sum\limits_{i=1}^{n} k_{i} \rho_{i}\right)\right)^{T} \\
				\beta \omega_{l}\left(\operatorname{tr}_{B}\left(\sum\limits_{i=1}^{n} k_{i} \rho_{i}\right)\right) & \mathcal{R}\left(\sum\limits_{i=1}^{n} k_{i} \rho_{i}\right)
			\end{array}\right)\notag \\
			& =&\left(\begin{array}{cc}
				\sum\limits_{i=1}^{n} k_{i} \alpha \beta E_{l \times l} & \alpha \omega_{l}\left(\sum\limits_{i=1}^{n} k_{i}\operatorname{tr}_{A}\left(\rho_{i}\right)\right)^{T} \\
				\beta \omega_{l}\left(\sum\limits_{i=1}^{n} k_{i} \operatorname{tr}_{B}\left(\rho_{i}\right)\right) & \sum\limits_{i=1}^{n} k_{i} \mathcal{R}\left(\rho_{i}\right)
			\end{array}\right)\notag \\
			& =&\left(\begin{array}{cc}
				\sum\limits_{i=1}^{n} k_{i} \alpha \beta E_{l \times l} & \alpha \sum\limits_{i=1}^{n} k_{i} \omega_{l}\left(\operatorname{tr}_{A}\left(\rho_{i}\right)\right)^{T} \\
				\beta \sum\limits_{i=1}^{n} k_{i} \omega_{l}\left(\operatorname{tr}_{B}\left(\rho_{i}\right)\right) & \sum\limits_{i=1}^{n} k_{i} \mathcal{R}\left(\rho_{i}\right)
			\end{array}\right)\notag \\
			& =&\sum\limits_{i=1}^{n} k_{i}\left(\begin{array}{cc}
				\alpha \beta E_{l \times l} & \alpha \omega_{l}\left(\operatorname{tr}_{A}\left(\rho_{i}\right)\right)^{T} \\
				\beta \omega_{l}\left(\operatorname{tr}_{B}\left(\rho_{i}\right)\right) & \mathcal{R}\left(\rho_{i}\right)
			\end{array}\right)\notag \\
			& =&\sum\limits_{i=1}^{n} k_{i} \mathcal{M}_{\alpha, \beta}^{l}\left(\rho_{i}\right)\notag,
	\end{eqnarray}
which proves (1) of Lemma \ref{P:1}.
	
(2) A general state $\rho$ can be written as \cite{rudolph2005further} $\rho=\sum\limits_{i} \xi_{i} \otimes \eta_{i}$, where $\xi_{i} \in \mathbb{C}^{d_{A} \times d_{A}}$ and $\eta_{i} \in \mathbb{C}^{d_{B} \times d_{B}}$. Denote $\sigma=(U \otimes V) \rho\left(U \otimes V\right)^{\dagger}$. We have
	\begin{eqnarray}\label{eq:m5}
		\mathcal{M}_{\alpha, \beta}^{l}(\sigma)=\left(\begin{array}{cc}
			\alpha \beta E_{l \times l} & \alpha \omega_{l}\left(\operatorname{tr}_{A}(\sigma)\right)^{T} \\
			\beta \omega_{l}\left(\operatorname{tr}_{B}(\sigma)\right) & \mathcal{R}(\sigma)
		\end{array}\right),
	\end{eqnarray}
	where
\begin{eqnarray}\notag
		&& \alpha \omega_{l}\left(\operatorname{tr}_{A}(\sigma)\right)^{T}
		\\ \notag& =&\alpha \omega_{l}\left(\operatorname{tr}_{A}\left((U \otimes V) \sum_{i}\left(\xi_{i} \otimes \eta_{i}\right)\left(U \otimes V\right)^{\dagger}\right)\right)^{T} \\\notag
		& =&\alpha \omega_{l}\left(\sum_{i} \operatorname{tr}_{A}\left(U \xi_{i} U^{\dagger} \otimes V \eta_{i} V^{\dagger}\right)\right)^{T} \\\notag
		& =&\alpha \omega_{l}\left(\sum_{i} \operatorname{tr}\left(\xi_{i}\right) V \eta_{i} V^{\dagger}\right)^{T}.
\end{eqnarray}
From (\ref{eq:v3}) and (\ref{eq:w2}), it yields that
	\begin{eqnarray}\label{eq:m7}\notag
			&& \alpha \omega_{l}\left(\operatorname{tr}_{A}(\sigma)\right)^{T}\\\notag&=&\alpha \sum_{i} \operatorname{tr}\left(\xi_{i}\right) \omega_{l}\left(V \eta_{i} V^{\dagger}\right)^{T} \\\notag
			& =&\alpha \sum_{i} \operatorname{tr}\left(\xi_{i}\right)\left(\operatorname{vec}\left(V \eta_{i} V^{\dagger}\right), \ldots, \operatorname{vec}\left(V \eta_{i} V^{\dagger}\right)\right)^{T} \\\notag
			& =&\alpha \sum_{i} \operatorname{tr}\left(\xi_{i}\right)\left(\left(\left(V^{\dagger}\right)^{T} \otimes V\right) \omega_{l}\left(\eta_{i}\right)\right)^{T} \\\notag
			& =&\alpha \sum_{i} \operatorname{tr}\left(\xi_{i}\right)\left((\overline{V} \otimes V) \omega_{l}\left(\eta_{i}\right)\right)^{T} \\\notag
			& =&\alpha \sum_{i} \operatorname{tr}\left(\xi_{i}\right) \omega_{l}\left(\eta_{i}\right)^{T}(\overline{V} \otimes V)^{T} \\\notag
			& =&\alpha \omega_{l}\left(\sum_{i} \operatorname{tr}_{A}\left(\xi_{i} \otimes \eta_{i}\right)\right)^{T}(\overline{V} \otimes V)^{T} \\
			& =&\alpha \omega_{l}\left(\operatorname{tr}_{A}(\rho)\right)^{T}(\overline{V} \otimes V)^{T}.
	\end{eqnarray}
Similarly, we have
	\begin{eqnarray}\label{eq:m8}
		\beta \omega_{l}\left(\operatorname{tr}_{B}(\sigma)\right)=\beta(\overline{U} \otimes U) \omega_{l}\left(\operatorname{tr}_{B}(\rho)\right).
	\end{eqnarray}

Using (\ref{eq:v3}), (\ref{eq:r2}) and (\ref{eq:r4}) we obtain
	\begin{eqnarray}\label{eq:m9}\notag
			&&\mathcal{R}(\sigma)\\\notag
			& =&\sum_{i} \mathcal{R}\left((U \otimes V)\left(\xi_{i} \otimes \eta_{i}\right)\left(U \otimes V\right)^{\dagger}\right) \\\notag
			& =&\sum_{i} \mathcal{R}\left(U \xi_{i} U^{\dagger} \otimes V \eta_{i} V^{\dagger}\right) \\\notag
			& =&\sum_{i} \operatorname{vec}\left(U \xi_{i} U^{\dagger}\right) \operatorname{vec}\left(V \eta_{i} V^{\dagger}\right)^{T} \\\notag
			& =&\sum_{i}\left(\left(U^{\dagger}\right)^{T} \otimes U\right) \operatorname{vec}\!\left(\xi_{i}\right)\!\left(\!\left(\left(V^{\dagger}\right)^{T} \otimes V\right) \operatorname{vec}\!\left(\!\eta_{i}\!\right)\!\right)^{T} \\\notag
			& =&\sum_{i}(\overline{U} \otimes U) \operatorname{vec}\left(\xi_{i}\right) \operatorname{vec}\left(\eta_{i}\right)^{T}(\overline{V} \otimes V)^{T} \\\notag
			& =&\sum_{i}(\overline{U} \otimes U) \mathcal{R}\left(\xi_{i} \otimes \eta_{i}\right)(\overline{V} \otimes V)^{T} \\
			& =&(\overline{U} \otimes U) \mathcal{R}(\rho)(\overline{V} \otimes V)^{T}.
	\end{eqnarray}
Therefore, combing (\ref{eq:m5}), (\ref{eq:m7}), (\ref{eq:m8}) and (\ref{eq:m9}) we have
	\begin{eqnarray}\notag
			&&\mathcal{M}_{\alpha, \beta}^{l}(\sigma)\\&=&\left(\!\!\begin{array}{cc}
				\alpha \beta E_{l \times l} & \alpha \omega_{l}\left(\operatorname{tr}_{A}(\rho)\right)^{T}(\overline{V} \otimes V)^{T} \\
				\beta(\overline{U} \otimes U) \omega_{l}\left(\operatorname{tr}_{B}(\rho)\right) & (\overline{U} \otimes U) \mathcal{R}(\rho)(\overline{V} \otimes V)^{T}
			\end{array}\!\!\right)\notag\\\notag
			& =&\widetilde{U}\left(\begin{array}{cc}
				\alpha \beta E_{l \times l} & \alpha \omega_{l}\left(\operatorname{tr}_{A}(\rho)\right)^{T} \\
				\beta \omega_{l}\left(\operatorname{tr}_{B}(\rho)\right) & \mathcal{R}(\rho)
			\end{array}\right)\widetilde{V}\\\notag
			&=&\widetilde{U} \mathcal{M}_{\alpha, \beta}^{l}(\rho) \widetilde{V},
	\end{eqnarray}
where $\overline{U}=\left(U^{\dagger}\right)^{T}$,
$$
\widetilde{U}=\left(\begin{array}{cc}I_{l \times l} & 0 \\ 0 & \overline{U} \otimes U\end{array}\right),~~ \widetilde{V}=\left(\begin{array}{cc}I_{l \times l} & 0 \\ 0 & V \otimes \overline{V}\end{array}\right)^{\dagger}.
$$
Hence,
	\begin{eqnarray}\notag
			&&\left\|\mathcal{M}_{\alpha, \beta}^{l}\left((U \otimes V) \rho\left(U \otimes V\right)^{\dagger}\right)\right\|_{t r}\\\notag&=&\left\|\widetilde{U} \mathcal{M}_{\alpha, \beta}^{l}(\rho) \widetilde{V}\right\|_{t r}\\\notag&=&\left\|\mathcal{M}_{\alpha, \beta}^{l}(\rho)\right\|_{t r},
	\end{eqnarray}
which proves (2) of Lemma \ref{P:1}.
\end{proof}

\subsection{Proof of Lemma 2}
\begin{proof}
	$\left(1\right)$ Since
$|\varphi\rangle=\sum\limits_{i=0}^{d-1} \sqrt{\mu_{i}}\left|i_{A} i_{B}\right\rangle$, one has
$$|\varphi\rangle\langle\varphi|=\sum\limits_{i=0}^{d-1} \sum\limits_{j=0}^{d-1} \sqrt{\mu_{i} \mu_{j}}\left|i_{A} i_{B}\right\rangle\left\langle j_{A} j_{B}\right|.$$
Hence,
	\begin{eqnarray}\notag
		\mathcal{M}_{\!\alpha, \beta}^{l}(|\varphi\rangle\langle\varphi|)\!=\left(\begin{array}{cc}
			\!\!\alpha \beta E_{l \times l} & \alpha \omega_{l}\left(\operatorname{tr}_{A}(|\varphi\rangle\langle\varphi|)\right)^{T} \\
			\!\!\beta \omega_{l}\left(\operatorname{tr}_{B}(|\varphi\rangle\langle\varphi|)\right) & \!\!\mathcal{R}(|\varphi\rangle\langle\varphi|)
		\end{array}\right).
	\end{eqnarray}

Let
	\begin{eqnarray}\notag
		\mathrm{M}_{1}=\left(\begin{array}{cc}
			\alpha \beta E_{l \times l} & \alpha A \\
			\beta A^{T} & \Delta_{1}
		\end{array}\right),~~ \mathrm{M}_{2}=\left(\begin{array}{cc}
			0 & 0 \\
			0 & \Delta_{2}
		\end{array}\right),
	\end{eqnarray}
	where
	\begin{widetext}
		\begin{eqnarray}\notag
			A=\left(\begin{array}{c}
				\mu_{0}, \underbrace{0, \ldots, 0}_{d-1}, 0, \mu_{1}, \underbrace{0, \ldots, 0}_{d-2}, 0,0, \mu_{2}, \underbrace{0, \ldots, 0}_{d-3}, \ldots, 0, \ldots, 0, \mu_{d-1} \\
				\vdots \\
				\mu_{0}, \underbrace{0, \ldots, 0}_{d-1}, 0, \mu_{1}, \underbrace{0, \ldots, 0}_{d-2}, 0,0, \mu_{2}, \underbrace{0, \ldots, 0}_{d-3}, \ldots, 0, \ldots, 0, \mu_{d-1}
			\end{array}\right)_{l \times d^{2}},
		\end{eqnarray}
		\begin{eqnarray}\notag
			\Delta_{1}=\operatorname{diag}(\mu_{0}, \underbrace{0, \ldots, 0}_{d-1}, 0, \mu_{1}, \underbrace{0, \ldots, 0}_{d-2}, 0,0, \mu_{2}, \underbrace{0, \ldots, 0}_{d-3}, \ldots, 0, \ldots, 0, \mu_{d-1})\in \mathbb{C}^{d^{2} \times d^{2}},
		\end{eqnarray}
		\begin{eqnarray}\notag
			\begin{aligned}
				& \Delta_{2}=\operatorname{diag}\left(0, \sqrt{\mu_{0} \mu_{1}}, \ldots, \sqrt{\mu_{0} \mu_{d-1}}, \sqrt{\mu_{1} \mu_{0}}, 0, \sqrt{\mu_{1} \mu_{2}}, \ldots, \sqrt{\mu_{1} \mu_{d-1}}\right. \\
				& \left.\sqrt{\mu_{2} \mu_{0}}, \sqrt{\mu_{2} \mu_{1}}, 0, \sqrt{\mu_{2} \mu_{3}}, \ldots, \sqrt{\mu_{2} \mu_{d-1}}, \ldots, \sqrt{\mu_{d-1} \mu_{0}}, \ldots, \sqrt{\mu_{d-1} \mu_{d-2}}, 0\right)\in \mathbb{C}^{d^{2} \times d^{2}}.
			\end{aligned}
		\end{eqnarray}
	\end{widetext}
Using the definition of trace norm we get
	\begin{eqnarray}\notag
		\left\|\mathcal{M}_{\alpha, \beta}^{l}(|\varphi\rangle\langle\varphi|)\right\|_{t r}=\left\|\mathrm{M}_{1}\right\|_{t r}+\left\|\mathrm{M}_{2}\right\|_{t r},
	\end{eqnarray}
	where $\left\|\mathrm{M}_{2}\right\|_{t r}=2 \sum\limits_{0 \leq i<j \leq d-1} \sqrt{\mu_{i} \mu_{j}}$. Since $\Delta_{1}=\mathcal{R}\left(\Delta_{1}\right)$ and $\Delta_{1}$ is a separable state, according to Theorem \ref{th:1} we get
	\begin{eqnarray}\notag
		\left\|\mathrm{M}_{1}\right\|_{t r}=\left\|\mathcal{M}_{\alpha, \beta}^{l}\left(\Delta_{1}\right)\right\|_{t r} \leq \sqrt{\left(l \alpha^{2}+1\right)\left(l \beta^{2}+1\right)}.
	\end{eqnarray}
	Therefore,
	\begin{eqnarray}\notag
			&& \left\|\mathcal{M}_{\alpha, \beta}^{l}(|\varphi\rangle\langle\varphi|)\right\|_{t r} \\ \notag&\leq& \sqrt{\left(l \alpha^{2}+1\right)\left(l \beta^{2}+1\right)}+2 \sum_{0 \leq i<j \leq d-1} \sqrt{\mu_{i} \mu_{j}}.
	\end{eqnarray}

$\left(2\right)$
It has been proved in \cite{qi2024detection} that for any $\mu_{i}>0$, $i=0,1, \ldots, d-1$, such that $\mu_{0}+\mu_{1}+\cdots+\mu_{d-1}=1$, one has
\begin{eqnarray}\notag
		2 \sum\limits_{0 \leq i<j \leq d-1} \sqrt{\mu_{i} \mu_{j}} \leq d-1.
\end{eqnarray}
Based on above relation and (1) of Lemma \ref{le:2}, we get
	\begin{eqnarray}\notag
			&&\left\|\mathcal{M}_{\alpha, \beta}^{l}(|\varphi\rangle\langle\varphi|)\right\|_{t r} \\\notag&
			\leq &\sqrt{\left(l \alpha^{2}+1\right)\left(l \beta^{2}+1\right)}+2 \sum_{0 \leq i<j \leq d-1} \sqrt{\mu_{i} \mu_{j}} \\\notag
			& \leq& \sqrt{\left(l \alpha^{2}+1\right)\left(l \beta^{2}+1\right)}+(d-1).
	\end{eqnarray}
Therefore, we complete the proof of (2) in Lemma \ref{le:2}.
\end{proof}

\subsection{Proof of Theorem 6}
\begin{proof}
	For a pure state $|\varphi\rangle$ in $\mathbb{C}^{d_{1}} \otimes \mathbb{C}^{d_{2}} \otimes \mathbb{C}^{d_{3}}$, using Theorem \ref{th:2} and (\ref{C:1}) we get
\begin{eqnarray}\label{B:1}
\begin{aligned}
			&\sqrt{1-\operatorname{tr}\left(\rho_{1}^{2}\right)} \geq \frac{1}{\sqrt{d(d-1)}}{\Big(\left\|\mathscr{M}_{1|23}\left(|\varphi\rangle\langle\varphi|\right)\right\|\bigg.}\\&{\Big.\qquad\quad\qquad\qquad-\sqrt{\left(l \alpha^{2}+1\right)\left(l \beta^{2}+1\right)}\Big)},
		\end{aligned}
\end{eqnarray}
\begin{eqnarray}\label{B:2}
		\begin{aligned}
			&\sqrt{1-\operatorname{tr}\left(\rho_{2}^{2}\right)} \geq \frac{1}{\sqrt{d(d-1)}}{\Big(\left\|\mathscr{M}_{2|13}\left(|\varphi\rangle\langle\varphi|\right)\right\|\Big.}\\&{\Big.\qquad\quad\qquad\qquad-\sqrt{\left(l \alpha^{2}+1\right)\left(l \beta^{2}+1\right)}\Big)}
		\end{aligned}
	\end{eqnarray}
and	
\begin{eqnarray}\label{B:3}
		\begin{aligned}
			&\sqrt{1-\operatorname{tr}\left(\rho_{3}^{2}\right)} \geq \frac{1}{\sqrt{d(d-1)}}{\Big(\left\|\mathscr{M}_{3|12}\left(|\varphi\rangle\langle\varphi|\right)\right\|\Big.}\\&{\Big.\qquad\quad\qquad\qquad-\sqrt{\left(l \alpha^{2}+1\right)\left(l \beta^{2}+1\right)}\Big)}.
		\end{aligned}
	\end{eqnarray}
Combining (\ref{B:1}), (\ref{B:2}) and (\ref{B:3}) we get
	\begin{eqnarray}\notag
		\begin{aligned}
			& \quad3 \sqrt{d(d-1)} \sqrt{1-\operatorname{tr}\left(\rho_{1}^{2}\right)}-3\mathscr{M}(|\varphi\rangle\langle\varphi|) \\&\quad+\left(3 \sqrt{\left(l \alpha^{2}+1\right)\left(l \beta^{2}+1\right)}+2(d-1)\right) \\
			& =3 \sqrt{d(d-1)} \sqrt{1-\operatorname{tr}\left(\rho_{1}^{2}\right)}-\left(\left\|\mathscr{M}_{1|23}\left(|\varphi\rangle\langle\varphi|\right)\right\|_{t r}\right.\\&\left.\quad+\left\|\mathscr{M}_{2|13}\left(|\varphi\rangle\langle\varphi|\right)\right\|_{t r}+\left\|\mathscr{M}_{3|12}\left(|\varphi\rangle\langle\varphi|\right)\right\|_{t r}\right) \\
			& \quad+\left(3 \sqrt{\left(l \alpha^{2}+1\right)\left(l \beta^{2}+1\right)}+2(d-1)\right) \\
			& \geq 3 \sqrt{d(d-1)} \sqrt{1-\operatorname{tr}\left(\rho_{1}^{2}\right)}\!\!-\!\sqrt{d(d-1)}\left(\sqrt{1-\operatorname{tr}\left(\rho_{1}^{2}\right)}\right.\\&\left.\quad+\sqrt{1-\operatorname{tr}\left(\rho_{2}^{2}\right)}+\sqrt{1-\operatorname{tr}\left(\rho_{3}^{2}\right)}\right)+2(d-1) \\
			& =2 \sqrt{d(d-1)} \sqrt{1-\operatorname{tr}\left(\rho_{1}^{2}\right)}-\sqrt{d(d-1)}
\left(\sqrt{1-\operatorname{tr}\left(\rho_{2}^{2}\right)}
\right.\\&\left.\quad+\sqrt{1-\operatorname{tr}\left(\rho_{3}^{2}\right)}\right)+2(d-1).
		\end{aligned}
	\end{eqnarray}

Since $\sqrt{1-\operatorname{tr}\left(\rho_{1}^{2}\right)} \geq 0$ and $\sqrt{1-\operatorname{tr}\left(\rho_{i}^{2}\right)} \leq \sqrt{1-\frac{1}{d}}(i=2,3)$, we obtain
\begin{eqnarray}\notag
		\begin{aligned}
			& \quad 3 \sqrt{d(d-1)} \sqrt{1-\operatorname{tr}\left(\rho_{1}^{2}\right)}-3\mathscr{M}(|\varphi\rangle\langle\varphi|)\\&\quad+\left(3 \sqrt{\left(l \alpha^{2}+1\right)\left(l \beta^{2}+1\right)}+2(d-1)\right) \\
			& \geq2 \sqrt{d(d-1)} \sqrt{1-\operatorname{tr}\left(\rho_{1}^{2}\right)}\!-\!\!\sqrt{d(d-1)}\!\left(\sqrt{1-\operatorname{tr}\left(\rho_{2}^{2}\right)}\right.\\&\left.\quad+\sqrt{1-\operatorname{tr}\left(\rho_{3}^{2}\right)}\right)+2(d-1) \\
			& \geq 2(d-1)-2 \sqrt{d(d-1)} \sqrt{1-\frac{1}{d}}
\\
            & =0.
		\end{aligned}
	\end{eqnarray}
Therefore
	\begin{eqnarray}\notag
		\begin{aligned}
			&\sqrt{1-\operatorname{tr}\left(\rho_{1}^{2}\right)} \geq \frac{1}{\sqrt{d(d-1)}}{\bigg(\mathscr{M}(|\varphi\rangle\langle\varphi|)\bigg.}\\&{\bigg.\qquad\quad\qquad\qquad-\sqrt{\left(l \alpha^{2}+1\right)\left(l \beta^{2}+1\right)}-\frac{2(d-1)}{3}\bigg)}.
		\end{aligned}
	\end{eqnarray}
	Similarly, for $i=2,3$, we have
	\begin{eqnarray}\notag
		\begin{aligned}
			&\sqrt{1-\operatorname{tr}\left(\rho_{i}^{2}\right)} \geq \frac{1}{\sqrt{d(d-1)}}{\bigg(\mathscr{M}(|\varphi\rangle\langle\varphi|)\bigg.}\\&{\bigg.\qquad\quad\qquad\qquad-\sqrt{\left(l \alpha^{2}+1\right)\left(l \beta^{2}+1\right)}-\frac{2(d-1)}{3}\bigg)}.
		\end{aligned}
	\end{eqnarray}
	
Now let $\left\{p_{i},\left|\varphi_{i}\right\rangle\right\}$ be the optimal decomposition for $\rho$ such that $C_{G M E}(\rho)=\sum\limits_{i} p_{i} C_{G M E}\left(\left|\varphi_{i}\right\rangle\right)$. Then
	\begin{eqnarray}\notag
		\begin{aligned}
			C_{G M E}(\rho)&=\sum\limits_{i} p_{i} C_{G M E}\left(\left|\varphi_{i}\right\rangle\right) \\
			& \geq \frac{1}{\sqrt{d(d-1)}} \sum\limits_{i} p_{i}{\bigg(\mathscr{M}(|\varphi_{i}\rangle\langle\varphi_{i}|)\bigg.}\\&{\bigg.\quad-\sqrt{\left(l \alpha^{2}+1\right)\left(l \beta^{2}+1\right)}-\frac{2(d-1)}{3}\bigg)} \\
			& \geq \frac{1}{\sqrt{d(d-1)}}{\bigg(\mathscr{M}(\rho)\bigg.}\\&{\bigg.\quad-\sqrt{\left(l \alpha^{2}+1\right)\left(l \beta^{2}+1\right)}-\frac{2(d-1)}{3}\bigg)},
		\end{aligned}
	\end{eqnarray}
which completes proof of Theorem \ref{th:7}.
\end{proof}


\begin{thebibliography}{99}
\bibitem{nielsen2010quantum} M.A. Nielsen and I.L. Chuang, Quantum computation and quantum information (Cambridge university press, Cambridge, UK, 2010).

\bibitem{divincenzo1995quantum} D.P. DiVincenzo, Science {\bf 270}, 5234(1995).

\bibitem{bennett1993teleporting} C.H. Bennett, G. Brassard, C. Cr{\'e}peau, R. Jozsa, A. Peres, and W.K. Wootters, Phys. Rev.  Lett. {\bf 70}, 1895(1993).

\bibitem{albeverio2002optimal} S. Albeverio, S.M. Fei, and W.L. Yang, Phys. Rev.  A{\bf 66}, 012301(2002).

\bibitem{ekert1991quantum} A.K. Ekert, Phys. Rev.  Lett. {\bf 67}, 661(1991).

\bibitem{fuchs1997optimal}C.A. Fuchs, N. Gisin, R.B. Griffiths, C.S. Niu, and A. Peres, Phys. Rev.  A{\bf 56}, 1163(1997).

\bibitem{gurvits2003classical} L. Gurvits, in Proceedings of the thirty-fifth annual ACM symposium on
Theory of computing(ACM Press, New York, 2003), pp. 10--19.

\bibitem{peres1996separability} A. Peres, Phys. Rev.  Lett. {\bf 77}, 1413(1996).

\bibitem{horodecki1996separability} M. Horodecki, P. Horodecki, and R. Horodecki, phys, (1996)

\bibitem{rudolph2003some} O. Rudolph, Phys. Rev.  A{\bf 67}, 032312(2003).

\bibitem{rudolph2005further} O. Rudolph, Quantum Inf. Process. {\bf 4}, 219(2005)

\bibitem{chen2002matrix} K. Chen and L.A. Wu, Quantum Inf. Comput. {\bf 3}, 193(2003)

\bibitem{zhang2008entanglement} C.J. Zhang, Y.S. Zhang, S. Zhang, and G.C. Guo, Phys. Rev.  A{\bf 77}, 060301(2008).

\bibitem{terhal2000bell} B.M. Terhal, Phys. Lett.  A{\bf 271}, 319(2000).

\bibitem{chruscinski2014entanglement} D. Chru{\'s}ci{\'n}ski and G. Sarbicki, J.Phys.A:Math.Theor. {\bf 47}, 483001(2014).

\bibitem{horodecki2009quantum} R. Horodecki, P. Horodecki, M. Horodecki, and K. Horodecki, Rev. Mod. Phys. {\bf 81}, 865(2009).

\bibitem{guhne2009entanglement} O. G{\"u}hne and G. T{\'o}th, Phys. Rep. {\bf 474}, 1(2009).

\bibitem{horodecki1997separability} P. Horodecki, Phys. Lett.  A{\bf 232}, 333(1997).

\bibitem{lupo2008bipartite} C. Lupo, P. Aniello, and A. Scardicchio, J.Phys.A:Math.Theor. {\bf 41}, 415301(2008).

\bibitem{li2011note} C.K. Li, Y.T. Poon, and N.S. Sze, J.Phys.A:Math.Theor. {\bf 44}, 315304(2011).

\bibitem{shi2023family} X. Shi and Y. Sun, Quantum Inf. Process. {\bf 22}, 131(2023)

\bibitem{shen2015separability} S.Q. Shen, M.Y. Wang, M. Li, and S.M. Fei, Phys. Rev.  A{\bf 92}, 042332(2015).

\bibitem{zhang2017realignment} Y.H. Zhang, Y.Y. Lu, G.B. Wang, and S.Q. Shen, Quantum Inf. Process. {\bf 16}, 1(2017)

\bibitem{li2017detection} M. Li, J. Wang, S.Q. Shen, Z.H. Chen, and S.M. Fei, Scientific Rep. {\bf 7}, 17274(2017).

\bibitem{qi2024detection} X.F. Qi, Results Phys. {\bf 57}, 107371(2024).

\bibitem{lee2003convex} S. Lee, D.P. Chi, S.D. Oh, and J. Kim, Phys. Rev.  A{\bf 68}, 062304(2003).

\bibitem{chen2005concurrence} K.Chen, S. Albeverio, and S.M. Fei, Phys. Rev.  Lett. {\bf 95}, 040504
(2005).

\bibitem{de2007lower} J. I. de Vicente, Phys. Rev.  A{\bf 75}, 052320(2007).

\bibitem{huber2013entropy} M. Huber, M. Perarnau-Llobet, and J. I. de Vicente, Phys. Rev.  A{\bf 88}, 042328(2013).

\bibitem{chen2016lower} W. Chen, S.M. Fei, and Z.J. Zheng, Quantum Inf. Process. {\bf 15}, 3761(2016).

\bibitem{zhu2018lower} X.N. Zhu, M. Li, and S.M. Fei, Quantum Inf. Process. {\bf 17}, 1(2018).

\bibitem{ma2011measure} Z.H. Ma, Z.H. Chen, J.L. Chen, C. Spengler, A. Gabriel, and M. Huber, Phys. Rev.  A{\bf 83}, 062325(2011).

\bibitem{li2017measure} M. Li, L.X. Jia, J. Wang, S.Q. Shen, and S.M. Fei, Phys. Rev.  A{\bf 96}, 052314(2017).

\bibitem{li2020improved} M. Li, Z. Wang, J. Wang, S.Q. Shen, and S.M. Fei, Quantum Inf. Process. {\bf 19}, 1(2020).

\bibitem{bennett1999unextendible} C.H. Bennett, D.P. DiVincenzo, T. Mor, P.W. Shor, J.A. Smolin, and B.M. Terhal, Phys. Rev.  Lett. {\bf 82}, 5385(1999).

\bibitem{rungta2001universal} P. Rungta, V. Bu{\v{z}}ek, C.M. Caves, M. Hillery, and G.J. Milburn, Phys. Rev.  A{\bf 64}, 042315(2001).

\bibitem{vidal2002computable} G. Vidal and R.F. Werner, Phys. Rev.  A{\bf 65}, 032314(2002).

\bibitem{jing2022criteria} N.H. Jing and M.M. Zhang, International Journal Theor Phys. {\bf 61}, 269(2022).

\bibitem{de2011multipartite}
J. I. de Vicente and M. Huber, Phys. Rev.  A{\bf 84}, 062306(2011).

\bibitem{gittsovich2010multiparticle} O. Gittsovich, P. Hyllus, and O. G{\"u}hne, Phys. Rev.  A{\bf 82}, 032306(2010).

\bibitem{shen2022optimization} S.Q. Shen, L. Chen, A.W. Hu, and M. Li. Quantum Inf. Process. {\bf 21}, 135(2022).

\bibitem{zhao2023detecting}
H. Zhao, J. Hao, J. Li, S.M. Fei, N.H. Jing, and Z.X. Wang, Results Phys. {\bf 54}, 107060(2023).

\end{thebibliography}
\end{document}